
\documentclass[10pt]{article}
\usepackage{amsfonts}
\usepackage{amsthm}
\usepackage[dvips]{graphicx,color}
\usepackage{graphicx}
\usepackage{xcolor}
\usepackage{amsmath}
\usepackage{indentfirst}
\usepackage[dvips]{graphicx,color}
\usepackage{authblk}

\newcommand{\fun}{{\cal F}}
\newcommand{\paral}{{\cal X}}
\newcommand{\killi}{{\cal Y}}

\newcommand{\qcd}{\begin{flushright} $\Box$ \end{flushright}}
\def\be{\begin{eqnarray}}
\def\ee{\end{eqnarray}}
\def\beq{\begin{equation}}
\def\eeq{\end{equation}}

\def\({\left (}
\def\){\right )}

\newtheorem{theorem}{Theorem}[section]
\newtheorem{lemma}[theorem]{Lemma}
\newtheorem{proposition}[theorem]{Proposition}
\newtheorem{corollary}[theorem]{Corollary}
\newtheorem{definition}{Definition}[section]
\newtheorem{remark}[theorem]{Remark}

\newtheorem{conjecture}[theorem]{Conjecture}
\title{Rigidity of geodesic completeness in the Brinkmann class of gravitational wave spacetimes}
\author[1]{I.P. Costa e Silva\footnote{Visiting Scholar. Permanent address: Department of Mathematics,
Universidade Federal de Santa Catarina, 88.040-900 Florian\'{o}polis-SC, Brazil.}}
\author[2]{J.L. Flores}
\author[3]{J. Herrera}
\affil[1]{\small{{\it Department of Mathematics,\\
University of Miami, Coral Gables, FL 33124, USA.}}}
\affil[2]{\small{{\it Departamento de \'Algebra, Geometr\'{i}a y Topolog\'{i}a,\\ Facultad de Ciencias, Universidad de M\'alaga \\ Campus Teatinos, 29071 M\'alaga, Spain.}}}
\affil[3]{\small{{\it Department of Mathematics,
Universidade Federal de Santa Catarina, 88.040-900 Florian\'{o}polis-SC, Brazil.}}}
\begin{document}

\maketitle

	\begin{abstract}
\noindent We consider restrictions placed by geodesic completeness on spacetimes possessing a null parallel vector field, the so-called {\em Brinkmann spacetimes}. This class of spacetimes includes important idealized gravitational wave models in General Relativity, namely the {\em plane-fronted waves with parallel rays}, or {\em pp-waves}, which in turn have been intensely and fruitfully studied in the mathematical and physical literatures for over half a century. More concretely, we prove a restricted version of a conjectural analogue for Brinkmann spacetimes of a rigidity result obtained by M.T. Anderson for stationary spacetimes. We also highlight its relation with a long-standing 1962 conjecture by Ehlers and Kundt. Indeed, it turns out that the subclass of Brinkmann spacetimes we consider in our main theorem is enough to settle an important special case of the Ehlers-Kundt conjecture in terms of the well known class of Cahen-Wallach spaces.

		
	\end{abstract}

\section{Introduction}

In 2000, M.T. Anderson proved a remarkable rigidity theorem \cite{anderson} establishing that {\em every geodesically complete, chronological, Ricci-flat $4$-dimensional stationary spacetime is isometric to (a quotient of) Minkowski spacetime}. (Recall that a spacetime, i.e., a connected time-oriented Lorentzian manifold, is said to be {\em stationary} if it admits a complete timelike Killing vector field.) The proof of this result is a powerful adaptation of the Cheeger-Gromov theory of sequences of collapsing Riemannian manifolds (see also \cite{cortier} for a much simpler proof of a slightly more restricted version of Anderson's theorem). The key importance of this work is not only due to the pioneering application of these techniques in General Relativity, but also to the striking insight it gives into the delicate nature of geodesic incompleteness in Lorentzian geometry.

Geodesic incompleteness of null or timelike geodesics has long been used in gravitational physics as a geometric description of gravitational collapse, such as that occurring at the center of black holes or at the Big Bang. Accordingly, that sort of geodesic incompleteness has indeed been rigorously shown to occur under reasonable, physically well-motivated conditions in the so-called {\em singularity theorems} of Mathematical Relativity \cite{BE,HE,oneill,wald}. Since gravity is thought to be always attractive (at least when one disregards quantum effects), gravitational systems are often unstable, and one expects gravitational collapse to be rather ubiquitous. This general idea led R. Geroch \cite{geroch} to conjecture that geodesically complete solutions to the Einstein field equations should be rare (in some suitable sense). Anderson's result can be viewed as one precise geometric realization of this physical idea.

Another such example appeared as early as 1962, in a separate development, when J. Ehlers and K. Kundt \cite[Section 2-5.7]{EK} put forth the conjecture that {\em every geodesically complete, Ricci-flat $4$-dimensional pp-wave is a plane wave}. A {\em (standard) pp-wave}\footnote{\label{foot1} Observe that pp-waves and plane waves can be defined intrinsically (i.e., in a coordinate-independent fashion - see, e.g., \cite[Definitions $1$ and $2$]{leistner}). Therefore, we use the term {\em standard} when there exists a preferred global coordinate system which allows one express the metric in a concrete way. Throughout the present paper, however, all pp-waves and plane waves considered will be standard in this sense, and so we shall omit this term unless there is risk of confusion. Nevertheless, this rule will not be applied to general Brinkmann spacetimes, because we will work simultaneously with general and standard ones.} is a spacetime of the form $(\mathbb{R}^{n}, g)$, where the metric $g$ is given in Cartesian coordinates $(u,v,x^1, \ldots, x^{n-2})$ by
\begin{equation}
\label{localBrink}
g = 2du(dv  + H(u,x)du) + \sum_{i=1}^{n-2}(dx^i)^2,
\end{equation}
and $H:\mathbb{R}^{n}\rightarrow \mathbb{R}$ is a smooth function independent of the $v$-coordinate. This class of spacetimes has been intensely studied both in the mathematical and physical literatures, since they give an idealized description of gravitational waves in General Relativity. A pp-wave where $H$ is quadratic on the $x$-coordinates, i.e., where
\[
H(u,x)=\sum_{i,j=1}^{n-2}a_{ij}(u)x^ix^j,
\]
is called a {\em (standard) plane wave}. Plane waves have a number of important physical and geometrical properties (see, e.g., Ch. 13 of \cite{BE} for some of these). The original version of the Ehlers-Kundt conjecture is still open, but there has been progress in obtaining partial results \cite{B,FlS,HR,leistner}.

Now, pp-waves are not stationary in general, but one can easily check from (\ref{localBrink}) that the vector field $\partial_v$ is {\em null} and {\em parallel} (i.e., with $\nabla \partial_v=0$; here and hereafter $\nabla$ will denote the Levi-Civita connection of the underlying metric tensor). Therefore, pp-waves are distinguished representatives of the larger class of {\em Brinkmann spaces}, that is, Lorentzian manifolds admitting a null parallel vector field $\paral$. Such manifolds owe their name to H.W. Brinkmann, who discovered them in 1925 \cite{brinkmann}. Brinkmann spaces have special Lorentzian holonomy, which in turn gives rise to a number of interesting geometric properties (see, e.g., \cite{batat,baum,galaev,leistner} and references therein for recent results). Brinkmann spaces are always time-orientable \cite{baum}, so there is no loss of generality in considering only connected, time-oriented Brinkmann spaces, which we will call simply Brinkmann spacetimes.

A nice way of viewing Brinkmann spacetimes is as null analogues of stationary spacetimes. One is then naturally led to consider the following null version of the Anderson's rigidity theorem, firstly conjectured in \cite{CF}.

\begin{conjecture}
\label{key}
Every strongly causal, Ricci-flat $4$-dimensional Brinkmann spacetime satisfying certain completeness condition is isometric to (a quotient of) a plane wave spacetime.
\end{conjecture}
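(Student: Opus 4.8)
The plan is to reduce Conjecture~\ref{key} to a statement purely about the transverse geodesic flow of a standard pp-wave, and then to extract the plane-wave structure from the completeness hypothesis. I do not expect the full conjecture to be within reach of current methods --- its analytic core is essentially the Ehlers--Kundt conjecture itself --- so I would first fix a restricted setting in which the argument can be closed: for instance, assuming the flow of the null parallel field $\paral$ is complete with well-behaved orbit space, and adding enough symmetry (local homogeneity) that the profile becomes independent of the wave coordinate. The target would then be: such a spacetime is isometric to a quotient of a Cahen--Wallach space, the special case of the Ehlers--Kundt conjecture advertised in the abstract.

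\emph{Normal form and field equations.} Using $\paral$, one introduces the wave coordinate $u$ as an affine parameter along $\paral^{\perp}$; the orbit space of the $\paral$-flow inherits a complete Riemannian metric, and strong causality excludes pathological global identifications, so that (in the simply connected case, and up to the quotient permitted in the statement) one recovers the standard form (\ref{localBrink}) on $\mathbb{R}^{n}$. A direct computation of the Ricci tensor of (\ref{localBrink}) then shows that Ricci-flatness is equivalent to $\sum_{i=1}^{n-2}\partial^2_{x^i}H(u,\cdot)=0$ for every fixed $u$, i.e.\ $H(u,\cdot)$ is harmonic on $\mathbb{R}^{n-2}$; in the $4$-dimensional case this means $H(u,\cdot)=\mathrm{Re}\,f_u$ for an entire function $f_u$ of $z=x^1+ix^2$, once the inessential linear-in-$x$ and purely $u$-dependent terms are absorbed by a coordinate change preserving the form of $g$.

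\emph{Reduction to a non-blow-up problem.} Parallelism of $\paral$ forces $u$ to be affine along any geodesic, null geodesics are automatically complete, and for a non-null geodesic one may normalize $\dot u\equiv 1$; the transverse components then satisfy the non-autonomous Newtonian system $\ddot x^{i}=\partial_{x^i}H(\tau,x)$, i.e.\ motion in the effective potential $-H$. Hence geodesic completeness of the spacetime is \emph{equivalent} to the absence of solutions of this system escaping to infinity in finite parameter time, for arbitrary Cauchy data. When the homogeneity assumption freezes the $u$-dependence, $-H$ becomes an autonomous harmonic potential, a first integral of the motion is available, and the radial equation can be integrated and compared with classical finite-time blow-up criteria; one then argues that unless $H$ is a quadratic form, the ``potential well'' it necessarily exhibits --- guaranteed, for any nonpolynomial $f_u$ or any $f_u$ of degree at least three, by inspection of its behaviour at infinity --- yields an explicit finite-time escaping trajectory, which together with harmonicity and $u$-independence pins the spacetime down to a Cahen--Wallach space.

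\emph{Main obstacle.} The crux is precisely this last implication, and without the symmetry hypothesis it becomes the genuinely open part of the Ehlers--Kundt conjecture: the Newtonian system is then truly non-autonomous, the well moves with $u=\tau$, so one must control its persistence along a trajectory while simultaneously using the Cauchy-data freedom to steer the particle into it, and the absence of a conserved energy blocks the standard comparison arguments. An alternative route, closer to Anderson's original strategy, would be to run a collapsing/rescaling argument on a sequence of pointed copies of the spacetime and force the limit geometry to be flat in the transverse directions; but transplanting the Cheeger--Gromov compactness machinery to this indefinite-signature, non-compact setting is itself a substantial obstacle. I would regard this analytic step --- however one approaches it --- as the essential difficulty.
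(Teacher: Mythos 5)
You correctly sense that the full conjecture is out of reach and that one must restrict; in fact the paper does exactly that, proving only the ``transversally Killing'' case (Theorem \ref{main}), which after Corollary \ref{thm:teoIJ2} and Theorem \ref{thm:teo1} reduces, just as in your normal-form step, to a standard autonomous pp-wave with $H$ harmonic in the transverse variables. Up to that point your outline matches the paper (modulo the cross-term $\Omega$, which the paper eliminates by the $u$-dependent rotation \eqref{eq3} rather than by a mere absorption of linear terms). The divergence — and the gap — is in the step that forces $H$ to be quadratic. You propose to extract this from \emph{geodesic completeness}, via finite-time blow-up of the transverse Newtonian system $\ddot x^{i}=\partial_{x^i}H$, claiming that any non-quadratic autonomous harmonic potential ``necessarily exhibits a potential well'' producing an explicit finite-time escaping trajectory. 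That assertion is precisely the analytic core of the (autonomous) Ehlers--Kundt conjecture and is not established by your sketch: the conserved energy only gives $\tfrac12\|\dot x\|^{2}=E+H(x)$ along the orbit, so blow-up requires the trajectory to actually track a direction of superquadratic growth of $H$, and a harmonic function is never radially symmetric (its superquadratic growth is confined to sectors interlaced with sectors where $H\to-\infty$), so the standard one-dimensional comparison/escape criteria do not apply. You flag this yourself as ``the essential difficulty,'' which is an honest assessment, but it means the proof is not closed.

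The paper avoids this entirely by exploiting the hypothesis you leave unused: \emph{strong causality}. Its mechanism is: if $-\hat H$ is not at most quadratic, then by the Boas--Boas theorem (Lemma \ref{boasyboas}) there is a sequence $p_k$ with $-\hat H(p_k)>k\|p_k\|^{2}+k$; Lemmas \ref{jose1} and \ref{jose2} produce loops $z_k$ through $0$ and $p_k$ along which the average of $-\hat H$ is at least $\tfrac15(-\hat H(p_k))$ while $\int\|\dot z_k\|^{2}$ is only $O(R_k^{2})$; lifting these to curves $\Gamma_k^{E}$ in $M$ via \eqref{keydef} then yields timelike curves starting at the origin, ending arbitrarily close to it, but passing through $p_k$ far away — contradicting strong causality. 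So completeness is used only to globalize the normal form (ranges of the coordinates, flatness of the fibers), not to constrain $H$. If you want to salvage your dynamical route, you would need to prove the blow-up statement for general non-quadratic autonomous harmonic $H$, which is a substantial open-ended problem; otherwise the causal-theoretic argument is the one that actually closes the restricted case.
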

Here, the phrase ``certain completeness condition'' parallels the condition, present in Anderson's theorem, that spacetime be geodesically complete. It is not clear to the authors if geodesic completeness alone would suffice in this context. Nevertheless, we do believe that geodesic completeness plus {\em transversal completeness} (see the definition just above Theorem \ref{thm:teoIJ}) should be enough to get the conclusion of Conjecture \ref{key}, since Theorems \ref{thm:teoIJ} and \ref{thm:teo1} show that, under these hypotheses, Conjecture \ref{key} reduces to the Ehlers-Kundt conjecture. On the other hand, the assumption that spacetime is {\em strongly causal}, i.e., has no ``almost closed'' nonspacelike curves, replaces the condition in Anderson's theorem that spacetime be {\em chronological}, that is, the absence of closed timelike curves \cite{CF}. These causal conditions imply that the Killing vector field in each case will give rise to an isometric action without fixed points, which is in turn convenient in taking quotients. While the exact analogue of chronology in the null case would be to require that spacetime be only {\em causal}, i.e., has no closed causal curves, it turns out, after a closer inspection, that a little more causality is required to have the mentioned quotient behave well \cite{CF}. Since in particular every plane wave spacetime is causally continuous \cite{EE}, and hence strongly causal, this assumption does not seem very restrictive.

Our main goal in this paper is to present a proof of the following restricted version of Conjecture \ref{key}.
\begin{theorem}
\label{main}
Let $(M,g)$ be a geodesically complete, strongly causal, Ricci-flat $4$-dimensional Brinkmann spacetime. If $(M,g)$ is transversally Killing, then the universal covering $(\overline{M}, \overline{g})$ of $(M,g)$ is isometric to a plane wave.
\end{theorem}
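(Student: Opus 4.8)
The plan is to combine the two structural reductions underlying Theorems~\ref{thm:teoIJ} and \ref{thm:teo1} with a direct verification of the Ehlers--Kundt conjecture in the special regime singled out by the transversally Killing condition. As a first step I would pass to the universal Lorentzian covering $(\overline{M},\overline{g})$: it is again a $4$-dimensional, Ricci-flat Brinkmann spacetime; it remains geodesically complete, since geodesics project and lift along the covering; it remains strongly causal, this being inherited by coverings; and the transversally Killing structure of $(M,g)$ pulls back to it. Since the statement to be proved concerns only the universal cover, one may from now on assume $(M,g)$ itself is simply connected and work with it directly.

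Next I would feed the transversally Killing hypothesis into the global normal form of Theorem~\ref{thm:teoIJ}. The key point is that this hypothesis provides a Killing field $V$ transverse to the wave fronts of $(M,g)$, and in particular yields the transversal completeness required by Theorem~\ref{thm:teoIJ}. With geodesic completeness, transversal completeness and strong causality all in hand, Theorem~\ref{thm:teoIJ} shows that $(M,g)$ is globally isometric to a standard pp-wave $(\mathbb{R}^{4},g)$ with
\[
g = 2\,du\,(dv + H(u,x)\,du) + (dx^{1})^{2} + (dx^{2})^{2},
\]
the null parallel field being $\partial_{v}$ and $H$ independent of $v$; moreover $V$ becomes, in this chart, a vector field with nowhere-vanishing $\partial_{u}$-component. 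Solving the Killing equations $\mathcal{L}_{V}g = 0$ and exploiting the residual gauge freedom of Brinkmann coordinates (which preserves the pp-wave form) then lets one normalize $V$ to $\partial_{u}$, whence $\partial_{u}H = 0$, i.e.\ $H = H(x)$ is autonomous.

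From this point the problem is classical two-dimensional analysis. Ricci-flatness of a pp-wave is equivalent to the transversal Laplace equation, so $H$ is harmonic on $\mathbb{R}^{2}$; writing $H = \mathrm{Re}\,F$ for an entire function $F$, this is the only constraint imposed by the Einstein equations. On the other hand, in the autonomous case every geodesic projects to a solution of the mechanical system $\ddot{x} = \lambda^{2}\,\nabla_{x}H$, with $\lambda$ the constant value of $\dot{u}$, so geodesic completeness of $(\mathbb{R}^{4},g)$ amounts to completeness of this flow for all $\lambda$; by the geodesic completeness criteria for pp-waves cited in the introduction — or directly, since superquadratic growth of $H$ along a spatial direction produces trajectories reaching spatial infinity in finite affine parameter — this forces $H$ to grow at most quadratically as $|x| \to \infty$. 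Combining quadratic growth with harmonicity, the Liouville theorem for entire functions gives that $F$, hence $H$, is a polynomial of degree at most two, $H(x) = \sum_{i,j} a_{ij}x^{i}x^{j} + \sum_{i} b_{i}x^{i} + c$ with constant coefficients and $a_{11} + a_{22} = 0$. The linear and constant terms are pure gauge — a translation of the $x$-coordinates followed by an affine redefinition of $v$ removes them while staying within the pp-wave class — so $H$ becomes a $u$-independent homogeneous quadratic form, and $(M,g)$ is a plane wave, in fact a Cahen--Wallach space.

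I expect the principal obstacle to be the translation carried out in the second paragraph: extracting autonomy of $H$ purely from the transversally Killing condition together with four-dimensionality and Ricci-flatness. Solving the Killing equations in a standard pp-wave chart is elementary but fiddly, since one must simultaneously use the gauge freedom of Brinkmann coordinates to bring $V$ into the form $\partial_{u}$ and track the integrability conditions this places on $H$; and one must verify that the completeness of $V$ — needed before Theorem~\ref{thm:teoIJ} can even be applied — is genuinely a consequence of the hypotheses rather than an additional assumption. A subsidiary technical point is the quantitative geodesic completeness input used in the last paragraph, for which I would rely on the pp-wave completeness results already quoted.
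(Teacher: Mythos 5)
Your reduction to a simply connected standard autonomous Brinkmann spacetime, and thence (via Ricci-flatness and $n=4$) to a pp-wave chart with $H$ harmonic, follows the paper's route, modulo one subtlety you gloss over: Theorem \ref{thm:teoIJ} only yields the general standard form \eqref{SBM} with a nontrivial $1$-form $\Omega$, and eliminating $\Omega$ requires the $u$-dependent rotation \eqref{eq3} of Theorem \ref{thm:teo1}, after which the new profile $\tilde H(U,X,Y)$ is generally \emph{not} autonomous even though it equals a $u$-independent function $\hat H(x,y)$ of the old spatial coordinates (Remark \ref{rem:aux}). One cannot simply ``normalize $V$ to $\partial_u$'' in the final pp-wave chart and conclude $\partial_U\tilde H=0$; the paper instead works with $\hat H$ and uses linearity of \eqref{eq3} to transfer quadraticity back to $\tilde H$. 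This is repairable, but it is not the main problem.

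The genuine gap is in your third paragraph, where you derive the at-most-quadratic growth of $H$ from geodesic completeness, asserting that ``superquadratic growth of $H$ along a spatial direction produces trajectories reaching spatial infinity in finite affine parameter.'' The completeness criteria in the literature give \emph{sufficient} conditions for completeness (at most quadratic growth implies complete), not the converse; the claim that superquadratic growth — which for a nonpolynomial harmonic $H$ occurs only along some sequence of points, not uniformly in a direction — forces incompleteness is precisely the hard content of the Ehlers--Kundt conjecture in the autonomous case, i.e.\ essentially the statement being proved. It is not available as a quotable fact, and no argument is offered for why a geodesic would actually track the superquadratic growth. The paper avoids this entirely: it never extracts the quadratic bound from completeness. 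Instead it uses \emph{strong causality}, which your proof never invokes beyond the appeal to Theorem \ref{thm:teoIJ}. Concretely, if $\hat H$ is not quadratic then by the Boas--Boas lemma $-\hat H$ fails to be at most quadratic, so there are points $p_k$ with $\hat H(p_k)<-k\|p_k\|^2-k$; Lemmas \ref{jose1} and \ref{jose2} then produce loops in the spatial plane along which the integral of $-\hat H$ dominates the kinetic term, and these lift to timelike curves $\Gamma^{E_0}_{k_0}$ starting and ending arbitrarily close to the origin while escaping a fixed neighbourhood — contradicting strong causality. Without this (or some genuinely new completeness-to-growth argument), your proof does not close.
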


If $\paral \in \Gamma(TM)$ denotes the null parallel vector field in the Brinkmann spacetime $(M,g)$, the extra condition ``transversally Killing'' means, by definition, that there exists a Killing vector field $\killi \in \Gamma(TM)$ such that $g(\paral, \killi) =1$ (see discussion after Theorem \ref{thm:teoIJ} below). 
A concrete situation where this occurs is when the Brinkmann spacetime is an {\em autonomous} pp-wave (\ref{localBrink}), i.e., $H$ does not depend on the variable $u$. In this case $\killi := \partial_u$ will ensure that the pp-wave is indeed transversally Killing (with $\paral=\partial_v$). Therefore, from the proof of Theorem \ref{main} we deduce the following version of the Ehlers-Kundt conjecture.

\begin{corollary}
\label{keycor}
Every geodesically complete, {\em strongly causal, autonomous,} Ricci-flat, $4$-dimensional pp-wave is a Cahen-Wallach space.
\end{corollary}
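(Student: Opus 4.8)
The plan is to obtain the conclusion from (the proof of) Theorem~\ref{main}. An autonomous pp-wave $(\mathbb{R}^{4},g)$ with
\[
g = 2du\big(dv + H(x)\,du\big) + (dx^{1})^{2} + (dx^{2})^{2},\qquad H=H(x^{1},x^{2}),
\]
is transversally Killing, since $\killi:=\partial_{u}$ is a Killing field with $g(\partial_{v},\killi)=1$ (here $\partial_{v}=\paral$), and it is already simply connected; hence Theorem~\ref{main} identifies it, up to isometry, with a plane wave. What remains is to upgrade ``plane wave'' to ``Cahen--Wallach'', and for this I would retrace the argument of Theorem~\ref{main} in this autonomous situation. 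Ricci-flatness of $g$ is equivalent to $\Delta_{x}H=H_{,11}+H_{,22}=0$, i.e.\ $H$ is a harmonic function on $\mathbb{R}^{2}$; the goal is to show that geodesic completeness forces $H$ to be a polynomial of degree at most two, whose quadratic part is then automatically trace-free by harmonicity. Such an $H$ is precisely a Cahen--Wallach wave profile.

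For the completeness analysis, recall first that along any geodesic $\gamma$ the quantity $\dot u=g(\partial_{v},\dot\gamma)$ is constant because $\partial_{v}$ is parallel. If $\dot u\equiv 0$ the remaining equations collapse to $\ddot x^{i}=0$ and $\ddot v=0$, so $\gamma$ is complete. If $\dot u\equiv c\neq 0$ one reparametrizes so that $u$ is the affine parameter; the transverse part then satisfies the autonomous conservative system
\[
\ddot x^{i}=c^{2}\,\partial_{i}H(x),
\]
with potential $-c^{2}H$, while the $v$-component is obtained by quadrature, $\dot v(u)=\dot v(u_{0})-2c\big(H(x(u))-H(x(u_{0}))\big)$, hence is defined wherever $x$ is. Therefore $(M,g)$ is geodesically complete if and only if the planar Newtonian system $\ddot x=\nabla H(x)$ is complete, and by the known completeness characterization for (standard) pp-waves (see, e.g., \cite{FlS} and the references therein) this holds if and only if $H(x)\le A|x|^{2}+B$ for $|x|$ large, for suitable constants $A,B$.

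The final ingredient is a Liouville-type rigidity: a harmonic function $H$ on $\mathbb{R}^{2}$ satisfying $H(x)\le A|x|^{2}+B$ for all large $|x|$ must be a polynomial of degree at most two. Indeed, writing $H=\mathrm{Re}\,f$ with $f$ entire, the Borel--Carath\'eodory inequality turns the one-sided bound on $\mathrm{Re}\,f$ into $|f(z)|\le C(1+|z|)^{2}$, whence Cauchy's estimates make $f$---and therefore $H$---a polynomial of degree $\le 2$; harmonicity then forces its quadratic part to be trace-free. A final pp-wave-preserving change of coordinates (translating the $x^{i}$ to complete the square in the quadratic-plus-linear part, then replacing $v$ by $v$ minus a multiple of $u$ to absorb the leftover constant) brings $g$ to the form $2du\big(dv+(\sum_{i,j}a_{ij}x^{i}x^{j})\,du\big)+\sum_{i}(dx^{i})^{2}$ with $(a_{ij})$ a constant, symmetric, trace-free matrix; that is, $(M,g)$ is a Cahen--Wallach space (the degenerate case $(a_{ij})\equiv 0$ being Minkowski).

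I expect the main obstacle to be the completeness characterization invoked above, specifically its nontrivial direction: that a geodesically complete pp-wave must have $H$ of at most quadratic growth at spatial infinity. For a general potential, fast decay along some direction need not drive a trajectory off to infinity there, so exhibiting genuinely incomplete geodesics when the quadratic bound fails is delicate---this is the substance of the cited results, and in the harmonic setting it can alternatively be realized by constructing explicit Newtonian trajectories escaping along a direction where $H$ (equivalently, the fall-off of the potential $-H$) grows faster than quadratically. By contrast, the Liouville step and the normalization of the metric are routine. One also has to check the elementary point, sketched above, that geodesic completeness of the four-dimensional pp-wave is equivalent to completeness of the transverse planar system.
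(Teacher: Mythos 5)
There is a genuine gap, and it sits exactly where you suspected. Your reduction of geodesic completeness to completeness of the planar system $\ddot x=\nabla H(x)$ is fine, and your Liouville step (Borel--Carath\'eodory plus Cauchy estimates) is just the paper's Lemma \ref{boasyboas} in disguise. But the pivotal claim --- that completeness of $\ddot x=\nabla H$ holds \emph{if and only if} $H(x)\le A|x|^2+B$ for large $|x|$ --- is not a citable result. What \cite{FlS} and the related literature establish is only the sufficiency of the at-most-quadratic bound for completeness. The necessity, i.e.\ that a complete trajectory structure forces the one-sided quadratic bound on a harmonic $H$, is precisely the unresolved core of the Ehlers--Kundt conjecture (even in the autonomous case it is not a ``known characterization''), so invoking it makes the argument circular: you are assuming essentially what the corollary is meant to prove. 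Your own closing paragraph concedes that producing incomplete geodesics when the bound fails is ``delicate''; it is not merely delicate, it is the open problem, and no construction is supplied.

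The paper sidesteps this entirely by never extracting the quadratic bound from geodesic completeness. Instead it uses the \emph{strong causality} hypothesis, which your proposal never touches --- a telltale sign, since a proof ignoring that hypothesis would settle the autonomous Ehlers--Kundt conjecture outright. Concretely, in the autonomous pp-wave case one has $\Omega=0$, hence $\alpha=0$ in \eqref{eq1}, the rotation \eqref{eq3} is trivial, and $\tilde H=H$ is harmonic and $u$-independent; if $-H$ fails to be at most quadratic, Lemmas \ref{jose1} and \ref{jose2} produce loops $z_k$ through points $p_k$ with $-H(p_k)>k\|p_k\|^2+k$ along which the integral of $-H$ is large compared to the kinetic cost, and these lift via \eqref{keydef} to timelike curves starting at the origin, ending at $(0,\Delta,0,0)$, yet escaping any fixed neighbourhood ${\cal U}_0$ --- contradicting strong causality. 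That contradiction, not a dynamical completeness criterion, is what forces $H$ to be quadratic; $u$-independence then gives the Cahen--Wallach form. To repair your proof you would either have to import this causal argument or supply an actual proof of the ``only if'' direction of your completeness characterization, which is not currently available.
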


Recall that a {\em Cahen-Wallach space} is an indecomposable, solvable geodesically complete symmetric Lorentzian manifold. These were classified by M. Cahen and N. Wallach in 1970 \cite{CW}, who showed the universal covering of any connected component of such a manifold is isometric to $(\mathbb{R}^{n}, g_{\lambda})$, where $\lambda := (\lambda_1, \ldots, \lambda _{n-2}) \in \mathbb{R} \setminus \{0\}$ and \[
g_{\lambda} = 2dudv + \sum_{i=1}^{n-2} \lambda _i (x^i)^2 du + \sum_{i=1}^{n-2}(dx^i)^2,
\]
which we recognize to be a plane wave without $u$-dependence.

There are several results related to Conjecture \ref{key} in the literature. For example, Leistner and Schliebner \cite{leistner} have recently shown that {\em the universal covering of any {\em compact} Ricci-flat Brinkmann spacetime is a plane wave}. Their result holds in any dimension, and is geometrically quite interesting, but it is unclear to us how strong the assumption of compactness actually is for this class of spacetimes. For instance, for Brinkmann spacetimes it in particular implies geodesic completeness \cite{leistner}, which unlike the Riemannian case does {\em not} follow automatically from compactness alone. At any rate, compact spacetimes are never even chronological \cite{oneill}, and therefore have arguably less physical interest. This, in part, has motivated our search for analogous results in the non-compact setting.

After finishing this paper, we became aware of another, much more general context in which the assumptions in Theorem \ref{main} are natural\footnote{We are grateful to J.M.M. Senovilla for bringing this to our attention, and for pointing out Refs. \cite{BSS,MS} to us.}, namely in a local classification scheme, recently carried out by M. Mars and J.M.M. Senovilla \cite{MS} (see also \cite{BSS}), for a class of algebraically special spacetimes which includes both Kerr and Brinkmann spacetimes (as well as generalizations of these). More specifically, in \cite{MS} the authors investigate $4$-dimensional Einstein spacetimes $(M,g)$ endowed with a Killing vector field $\killi \in \Gamma(TM)$ and satisfying a certain ``alignment'' relation between the Weyl tensor and the curl of $\killi$. In an important special case (cf. Theorem 1 of \cite{MS}) of their classification scheme as applied to Ricci-flat spacetimes, these authors show the global existence of a parallel null vector field $\paral \in \Gamma(TM)$ for which $g(\paral,\killi) =1$, and therefore end up precisely with what we call here a transversally Killing Brinkmann spacetime. Indeed, they show that these are {\em locally} isometric to autonomous pp-waves. (Mars and Senovilla call these {\em stationary vacuum Brinkmann spacetimes}.) Theorem \ref{main} can thus be viewed as a global rigidity result pertaining to such a subclass of spacetimes. 

The paper is organized as follows. In Section \ref{sec:prel}, we introduce the so-called {\em standard} Brinkmann spacetimes, and establish some of the terminology which we will need in the main proof. Brinkmann already knew that a 4-dimensional Ricci-flat Brinkmann spacetime is locally a pp-wave \cite{brinkmann} (see also \cite{galaev,schimming}), but we wish to globalize this result here, which we do in Section \ref{sec2}. After some technical lemmas, Theorem \ref{main} is proved in Section \ref{sec3}. We finish with a discussion at the end of this same section of a context in which our theorem implies that our spacetime is of Cahen-Wallach type.

\section{Preliminaries on Brinkmann spacetimes}
\label{sec:prel}
Let $(M^n,g)$ ($n\geq 3$) be a Brinkmann spacetime, which, recall, is a smooth\footnote{Here and hereafter, by {\em smooth} we always mean $C^{\infty}$.} connected time-oriented Lorentzian manifold admitting a complete null parallel vector field $\paral$ (i.e., with $\nabla \paral=0$). We will say that the Brinkmann spacetime $(M,g)$ is {\em standard} if $M=\mathbb{R}^2\times Q$ for some $(n-2)$-dimensional smooth manifold $Q$ which we shall call the {\em spatial fiber}, and the metric $g$ can be expressed as
%

\begin{equation}\label{SBM}
g=du\otimes\left(dv+H\,du+\Omega\right) + \left(dv+H\,du+\Omega\right)\otimes du + \gamma ,
\end{equation}
where:

\begin{itemize}
	\item[(a)] $\gamma$ is a smooth $(0,2)$-tensor on $\mathbb{R}^2\times Q$ whose radical at each $p=(v_0,u_0,x_0)\in \mathbb{R}^2\times Q$ is $span\{\partial_v|_{p},\partial_u|_{p}\}$, and so $Q\ni x\mapsto \gamma_{(v_0,u_0,x)}|_{T_{x}Q\times T_{x}Q}$ defines a smooth Riemannian metric on $Q$.
	
	\item[(b)] $\Omega$ is a smooth 1-form on $\mathbb{R}^2\times Q$ with $\Omega(\partial_v)=\Omega(\partial_{u})=0$, and so $Q\ni x\mapsto \Omega_{(v_0,u_0,x)}|_{T_{x}Q}$ defines a smooth 1-form on $Q$ and
	
	\item[(c)] $\gamma$, $\Omega$ and $H$ have no dependence on the $v$-coordinate as $\partial_v=\paral$ is in particular a Killing vector field.
\end{itemize}

As discussed in the Introduction, if $\Omega=0$ and $\gamma$ is the flat Euclidean metric on $Q=\mathbb{R}^2$, we will say that $(M,g)$ is a {\em (standard) pp-wave}. Moreover, a standard pp-wave where $H$ is quadratic on the $x$-coordinates will be called a {\em (standard) plane wave}. Finally, a standard Brinkmann spacetime will be called {\em autonomous} if the quantities $H,\Omega$ and $\gamma$ in \eqref{SBM} have no dependence on the coordinate $u$. Therefore, in the autonomous standard Brinkmann case the vector field $\partial_u$ is also a Killing vector field.

In general, a Brinkmann spacetime need not be standard. However, as recently shown by two of us (IPCS and JLF) \cite{CF}, it is possible to obtain mild conditions ensuring that a Brinkmann spacetime $(M,g)$ the complete parallel vector field $\paral$ can indeed be expressed in the standard form. This will happen, for instance, if $(M,g)$ is {\em transversally complete}, which means by definition that there exists a complete field $\killi \in \Gamma (TM)$ {\em conjugate} to $\paral$, in the sense that $g(\paral,\killi)=1$ and $[\paral,\killi]=0$. Concretely (see \cite[Theorem V.11]{CF}),

\begin{theorem}\label{thm:teoIJ}
	Let $(M,g)$ be a strongly causal and transversally complete Brink\-mann spacetime. Then, the universal covering $(\overline{M},\overline{g})$ of $(M,g)$ is isometric to a standard Brinkmann spacetime (\ref{SBM}). The isometry can be chosen to be such that it associates the lift $\overline{\paral}$ of $\paral$ to $\partial_v$ and the lift $\overline{\killi}$ of $\killi$ to $\partial_u$.
\end{theorem}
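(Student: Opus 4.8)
\emph{Proof proposal.} The plan is to turn the two commuting complete vector fields $(\paral,\killi)$ into global coordinates by quotienting out the $\mathbb{R}^{2}$-action they generate. First I would pass to the universal covering: the lifts $\overline{\paral},\overline{\killi}$ are again complete (integral curves lift to all parameter values by the path-lifting property of coverings), they still satisfy $\overline{g}(\overline{\paral},\overline{\killi})=1$ and $[\overline{\paral},\overline{\killi}]=0$, $\overline{\paral}$ is still null and parallel, and strong causality is inherited by coverings (a causally convex neighbourhood downstairs lifts sheetwise to causally convex neighbourhoods upstairs). So we may assume $(M,g)$ is simply connected and drop the bars. Since $\paral$ is parallel, the $1$-form $\paral^{\flat}:=g(\paral,\cdot)$ is parallel, hence closed, hence --- by simple connectedness --- exact: $\paral^{\flat}=df$ for a smooth $f\colon M\to\mathbb{R}$, unique up to an additive constant. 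From $g(\paral,\paral)=0$ and $g(\paral,\killi)=1$ we get $\paral(f)=0$ and $\killi(f)=1$, so $df$ vanishes nowhere. Completeness and commutation of $\paral,\killi$ produce a smooth action $\Phi\colon\mathbb{R}^{2}\times M\to M$, $\Phi((v,u),p)=\phi^{\paral}_{v}(\phi^{\killi}_{u}(p))$, and commutation gives $\partial_{v}\Phi=\paral$ and $\partial_{u}\Phi=\killi$ pointwise (the flow of $\paral$ preserves $\killi$).

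Next I would show $\Phi$ is free and proper. Freeness: if $\Phi((v,u),p)=p$, evaluating $f$ and using $f\circ\Phi((v,u),\cdot)=f(\cdot)+u$ forces $u=0$; then the integral curve of $\paral$ through $p$ is closed, hence a closed causal curve (as $\paral$ is null and nowhere zero), so $v=0$ by causality. Properness is the delicate point --- I expect it to be the main obstacle (see below). Granting it, $Q:=M/\mathbb{R}^{2}$ is a smooth $(n-2)$-manifold and $\pi\colon M\to Q$ a principal $\mathbb{R}^{2}$-bundle; since $\mathbb{R}^{2}$ is contractible the bundle is trivial and admits a global section. Among all sections I would choose one $\sigma\colon Q\to M$ with image in $\{f=0\}$, obtained from any section $\sigma_{0}$ by setting $\sigma(x):=\Phi((0,-f(\sigma_{0}(x))),\sigma_{0}(x))$ (legitimate since $\killi(f)=1$). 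The resulting trivialization $\Psi\colon\mathbb{R}^{2}\times Q\to M$, $\Psi((v,u),x)=\Phi((v,u),\sigma(x))$, is a diffeomorphism with $\Psi_{*}\partial_{v}=\paral$, $\Psi_{*}\partial_{u}=\killi$, and $f\circ\Psi=u$ (since $f$ grows at unit rate along $\killi$ and vanishes on $\sigma$). This provides the global coordinates $(v,u,x^{i})$.

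Then I would read off the metric. From $\paral^{\flat}=df=du$ we get $g(\partial_{v},\partial_{v})=g(\partial_{v},\partial_{x^{i}})=0$ and $g(\partial_{v},\partial_{u})=1$, so $g$ has no $dv\otimes dv$ or $dv\otimes dx^{i}$ terms and unit $du\,dv$-coefficient; and since $\paral=\partial_{v}$ is Killing, $\mathcal{L}_{\partial_{v}}g=0$, so all components are $v$-independent (item (c)). Put $H:=\tfrac12 g(\killi,\killi)$, $\Omega:=g(\killi,\cdot)-g(\killi,\killi)\,du-dv$, and $\gamma:=g-(du\otimes dv+dv\otimes du)-2H\,du\otimes du-(du\otimes\Omega+\Omega\otimes du)$. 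A short substitution gives $g=du\otimes(dv+H\,du+\Omega)+(dv+H\,du+\Omega)\otimes du+\gamma$, with $\Omega(\partial_{v})=\Omega(\partial_{u})=0$ (item (b)). Since the Gram matrix of $\mathrm{span}\{\partial_{v},\partial_{u}\}$ has determinant $-1$, that plane is nondegenerate and timelike, its $g$-orthogonal complement is spacelike and maps isomorphically onto $T_{x}Q$ under the projection, and $\gamma$ agrees with $g$ there; hence $\gamma|_{T_{x}Q}$ is a Riemannian metric and the radical of $\gamma$ is exactly $\mathrm{span}\{\partial_{v},\partial_{u}\}$ at every point (item (a)). Smoothness and $v$-independence of $H,\Omega,\gamma$ are clear. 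Thus $g$ is a standard Brinkmann metric \eqref{SBM} with $\partial_{v}=\paral$, $\partial_{u}=\killi$, which also pins down the asserted identification of the lifted vector fields.

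The hard part is properness of $\Phi$, and this is exactly where strong causality --- rather than mere causality --- must be used. If $\Phi$ were not proper there would be $(v_{k},u_{k})$ leaving every compact set, $p_{k}\to p$, and $\Phi((v_{k},u_{k}),p_{k})\to q$. Applying $f$ shows $u_{k}\to f(q)-f(p)$ is bounded, so $p'_{k}:=\phi^{\killi}_{u_{k}}(p_{k})$ subconverges to some $p'$ with $f(p')=f(q)$, while $|v_{k}|\to\infty$ and the null $\paral$-geodesics $r\mapsto\phi^{\paral}_{r}(p'_{k})$ on $[0,v_{k}]$ run from near $p'$ to near $q$ with unbounded parameter length. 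A limit-curve argument then yields an inextendible causal curve that returns to every neighbourhood of $p'$ and of $q$; combined with the fact that a strongly causal spacetime admits no inextendible causal curve imprisoned in a compact set, and that small neighbourhoods of $p'$ and $q$ may be taken causally convex, this forces a contradiction. An alternative route is to first quotient by the null flow of $\paral$ alone --- proving, with the same strong-causality input, that $M/\mathbb{R}_{\paral}$ is a Hausdorff smooth manifold and $\pi$ a locally trivial bundle --- and then to split off the $\killi$-direction by means of $f$; but the strong-causality step is unavoidable on either route, and making it fully rigorous is the technical core of the proof.
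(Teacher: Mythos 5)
First, note that the paper does not actually prove Theorem \ref{thm:teoIJ}: it is imported verbatim from \cite[Theorem V.11]{CF}, so there is no internal argument to compare yours against. That said, your architecture is exactly the natural one (and, as far as the statement of the theorem and the surrounding discussion indicate, the one followed in \cite{CF}): lift to the universal cover, use that $\paral^{\flat}$ is parallel hence exact to produce $f$ with $\paral(f)=0$, $\killi(f)=1$, build the $\mathbb{R}^2$-action from the two commuting complete flows, show it is free and proper, trivialize over the quotient with a section inside $\{f=0\}$, and read off \eqref{SBM}. Your treatment of the covering step, freeness, the choice of section, and the algebraic identification of $H$, $\Omega$, $\gamma$ (including the positivity of $\gamma$ on the orthocomplement of the timelike plane ${\rm span}\{\paral,\killi\}$) is correct.

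The genuine gap is where you say it is, but it is larger than your sketch suggests. Properness of the $\mathbb{R}^2$-action does reduce, via $f$, to properness of the flow of $\paral$ alone; however, your limit-curve argument does not close. From $p'_k\to p'$, $\phi^{\paral}_{v_k}(p'_k)\to q$, $v_k\to\infty$, the only limit curve you control is $s\mapsto\phi^{\paral}_s(p')$, and convergence of $\phi^{\paral}_{\cdot}(p'_k)$ to it is uniform only on compact parameter intervals; since $v_k\to\infty$, nothing forces this limit curve to accumulate at $q$, so the claimed ``inextendible causal curve returning to every neighbourhood of $p'$ and of $q$'' is not produced. Strong causality does handle the case where $q$ lies on the orbit of $p'$ (one then gets orbit segments leaving and re-entering a causally convex neighbourhood of $p'$, after ruling out imprisonment), but the case of two \emph{distinct} orbits that cannot be separated in the quotient --- i.e., Hausdorffness of the orbit space --- is untouched by that argument, and this is precisely the hard content of \cite[Section V]{CF}. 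Strong causality does give you immediately that each orbit meets a causally convex neighbourhood in a single arc (so the action is wandering and locally trivial over the orbit space), but wandering plus free does not imply proper. To complete the proof you would need an additional mechanism --- in \cite{CF} this involves the interplay between the level sets of $f$, the null hypersurface structure they carry, and strong causality --- to separate distinct $\paral$-orbits. As written, your proposal proves the theorem modulo exactly the step that constitutes the cited reference's main technical contribution.
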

We will say that a Brinkmann spacetime $(M,g)$ is {\em transversally Killing} if there exists a (not necessarily complete) Killing field $\killi$ conjugate to the complete parallel vector field $\paral$ of $(M,g)$. Actually, as long as $\killi$ is Killing, we need only impose {\em either} that $ [\paral,\killi] =0$ on $M$ and $g(\paral (p), \killi(p)) =1 $ at a single point $p \in M$, {\em or} that $g( \paral , \killi ) =1 $ on $M$ in order to ensure that $(M,g)$ is transversally Killing:

\begin{proposition}
\label{prop:killing}
Let $(M,g)$ be a spacetime with a parallel vector field $\paral$ and a Killing vector field $\killi$. Then $ [\paral,\killi] =0$ if and only if  $g( \paral , \killi ) $ is constant throughout $M$.
\end{proposition}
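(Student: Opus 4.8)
The plan is to compute the differential (equivalently, the metric gradient) of the function $f := g(\paral,\killi)$ and to identify it, up to sign, with the bracket $[\paral,\killi]$; once that identification is in hand, the stated equivalence is immediate from the connectedness of $M$ (part of the standing definition of a spacetime). So the proof is essentially two short computations plus one topological remark.

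First I would differentiate $f$ along an arbitrary vector field $X$ and use that $\paral$ is parallel:
\[
X(f) = X\big(g(\paral,\killi)\big) = g(\nabla_X\paral,\killi) + g(\paral,\nabla_X\killi) = g(\paral,\nabla_X\killi),
\]
since $\nabla\paral = 0$. Next I would invoke the Killing equation for $\killi$, which is exactly the statement that the endomorphism field $X\mapsto\nabla_X\killi$ is $g$-skew-symmetric, i.e.\ $g(\nabla_X\killi,Y) = -g(\nabla_Y\killi,X)$ for all $X,Y$. Setting $Y=\paral$ yields
\[
X(f) = g(\nabla_X\killi,\paral) = -g(\nabla_\paral\killi,X) \qquad \text{for all } X,
\]
so that $\nabla f = -\nabla_\paral\killi$.

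On the other hand, using once more that $\paral$ is parallel (hence $\nabla_\killi\paral = 0$), one has
\[
[\paral,\killi] = \nabla_\paral\killi - \nabla_\killi\paral = \nabla_\paral\killi = -\nabla f.
\]
Therefore $[\paral,\killi]$ vanishes identically on $M$ if and only if $\nabla f \equiv 0$ on $M$, and since $M$ is connected this holds if and only if $f = g(\paral,\killi)$ is constant throughout $M$.

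I do not expect any genuine obstacle here: the whole argument is routine Riemannian-geometry bookkeeping. The only points requiring minor care are the sign and index placement in the Killing identity, and the observation that the hypothesis "$\paral$ parallel" is used twice and cannot be weakened to, say, "$\paral$ Killing": once to discard the term $g(\nabla_X\paral,\killi)$ in the derivative of $f$, and once to discard $\nabla_\killi\paral$ in the commutator. Connectedness of the spacetime is what upgrades "$\nabla f\equiv 0$" to "$f$ constant".
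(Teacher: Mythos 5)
Your proof is correct and follows essentially the same route as the paper: differentiate $g(\paral,\killi)$, kill the $\nabla\paral$ term, use the skew-symmetry of $\nabla\killi$ from the Killing equation, and identify the result with $[\paral,\killi]$ via $\nabla_\killi\paral=0$. The only difference is that you spell out the connectedness step explicitly, which the paper leaves implicit.
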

{\em Proof.} Given any vector field $Z\in \Gamma (TM)$, we have
\[
d(g( \paral ,\killi ))(Z) = Z g( \paral ,\killi ) = g( \paral ,\nabla_Z\killi ) = - g( Z ,\nabla _{\paral} \killi ) = g( Z ,[\killi, \paral]),
\]
where we have used that $\killi$ is Killing on the third equality and that $\paral$ is parallel on the second and the last equalities. 
\qcd

As a consequence of Theorem \ref{thm:teoIJ}, we have the following.

\begin{corollary}\label{thm:teoIJ2}
Let $(M,g)$ be a strongly causal, geodesically complete and transversally Killing Brink\-mann spacetime. Then, the universal covering $(\overline{M},\overline{g})$ of $(M,g)$ is isometric to a standard autonomous Brinkmann spacetime.
\end{corollary}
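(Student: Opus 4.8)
The plan is to \emph{reduce the statement to Theorem~\ref{thm:teoIJ}}: I will argue that a strongly causal, geodesically complete, transversally Killing Brinkmann spacetime is automatically transversally complete, and then that the Killing symmetry forces the resulting standard form to be autonomous. So let $\paral$ be the complete parallel null field and let $\killi$ be a Killing field conjugate to it. I would first record a few identities used below: by Proposition~\ref{prop:killing}, $g(\paral,\killi)\equiv 1$; since $\paral$ is parallel, $\nabla_{\killi}\paral=0$, and together with $[\paral,\killi]=0$ this gives $\nabla_{\paral}\killi=0$, so $\killi$ is parallel along the (complete, affinely parametrized) integral curves of $\paral$ and the flow of $\paral$ leaves $\killi$ invariant; finally, since $\killi$ is Killing, $g(\killi,\killi)$ is constant along each integral curve of $\killi$.

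The heart of the matter --- and, I expect, the main obstacle --- is to show that $\killi$ (or at least some conjugate field $\killi'=\killi+\mu\paral$ with $\paral\mu=0$, which serves just as well) is \emph{complete}, so that $(M,g)$ satisfies the transversal-completeness hypothesis of Theorem~\ref{thm:teoIJ}. I would proceed by contradiction through the escape lemma: an inextendible integral curve $\sigma\colon(a,b)\to M$ of $\killi$ with $b<\infty$ must eventually leave every compact subset of $M$, and I would exclude this using geodesic completeness together with the constraints the Brinkmann structure places on $\sigma$. Concretely, $g(\killi,\killi)$ is constant along $\sigma$ and $g(\paral,\killi)\equiv 1$, while $\nabla_{\paral}\killi=0$ ties the motion of $\sigma$ along the complete integral curves of $\paral$ to its transverse, essentially Riemannian behaviour; strong causality is used to ensure the foliation by the integral curves of $\paral$ is well behaved, so that a transverse completeness argument (of the type ``Killing fields of complete Riemannian metrics are complete'') can be run. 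Whether one carries this out directly or invokes an available completeness criterion, the conclusion is that $(M,g)$ is transversally complete.

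Granting that, Theorem~\ref{thm:teoIJ} yields an isometry of the universal covering $(\overline M,\overline g)$ onto a standard Brinkmann spacetime~\eqref{SBM}, which moreover can be chosen so that the lift $\overline{\paral}$ corresponds to $\partial_v$ and the lift $\overline{\killi}$ to $\partial_u$. It then only remains to check that this standard form is autonomous. Since the covering projection is a local isometry, the lift $\overline{\killi}$ of the Killing field $\killi$ is again Killing, and through the isometry this says precisely that $\partial_u$ is a Killing field for the metric~\eqref{SBM}. But $\partial_u$ is one of the coordinate vector fields, so $\calL_{\partial_u}$ of~\eqref{SBM} vanishes if and only if the $u$-derivatives of all its coefficients vanish, i.e.\ if and only if $H$, $\Omega$ and $\gamma$ have no dependence on $u$ --- which is exactly the definition of a standard \emph{autonomous} Brinkmann spacetime. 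This gives the corollary; essentially all of the work is in the completeness step, the remainder being bookkeeping with Theorem~\ref{thm:teoIJ} and the definitions.
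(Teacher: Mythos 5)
Your overall route is exactly the paper's: reduce to Theorem~\ref{thm:teoIJ} by showing that the Killing field $\killi$ conjugate to $\paral$ is complete (so that $(M,g)$ is transversally complete), and then read off autonomy from the fact that the isometry of Theorem~\ref{thm:teoIJ} sends $\overline{\killi}$ to $\partial_u$, whence $\partial_u$ is Killing and $H$, $\Omega$, $\gamma$ are $u$-independent. That last step is correct and is precisely what the paper does.

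The difficulty is the step you yourself single out as the heart of the matter: you never actually establish that $\killi$ is complete. The paper disposes of this in one line by appealing directly to geodesic completeness of $(M,g)$ --- i.e.\ it invokes the completeness of Killing fields on geodesically complete manifolds, with no Brinkmann-specific argument. Your sketched substitute does not close the gap as described. The identity $\killi\, g(\killi,\killi)=0$ says the orbits of $\killi$ have constant Lorentzian ``speed'', which in the Riemannian case combines with Hopf--Rinow to trap a finite-parameter orbit in a compact ball, but gives no control at all when $g$ is indefinite (the orbit could even be null); so the analogy with ``Killing fields of complete Riemannian metrics are complete'' is exactly the part that fails here. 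Likewise, the ``transverse, essentially Riemannian behaviour'' you propose to exploit presupposes the splitting $M\cong\mathbb{R}^2\times Q$ with Riemannian fiber, which is precisely the standard form that Theorem~\ref{thm:teoIJ} is supposed to deliver \emph{after} completeness is known --- using it at this stage is circular. In short, the escape-lemma strategy is left without the estimate that would actually prevent the orbit from escaping in finite time. If you replace that paragraph by the one-line appeal the paper makes (geodesic completeness of $(M,g)$ implies completeness of the Killing field $\killi$), the remainder of your argument is correct and coincides with the paper's proof.
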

\begin{proof}
	Note that since $(M,g)$ is transversally Killing, there exists a Killing vector field $\killi$ such that $g(\paral,\killi)=1$ and $[\paral,\killi]=0$. Since $(M,g)$ is geodesically complete, $\killi$ is actually a complete vector field conjugate to $\paral$, and so $(M,g)$ is also transversally complete. Thus, we can apply Theorem \ref{thm:teoIJ} and obtain that the universal cover of $(M,g)$ is isometric to a standard Brinkmann spacetime. Moreover, the isometry can be chosen to be such that $\partial_u$ is associated to $\overline{\killi}$ the lift of $\killi$, and so $\partial_u$ is Killing. In particular, the standard Brinkmann spacetime is autonomous.
\end{proof}

Without any causality assumptions we have the following rigidity result, whose proof uses (a version of) Theorem 3 of \cite{leistner}. 

\begin{proposition}
\label{prop:nocausal}
Let $(M,g)$ be a geodesically complete, Ricci-flat $4$-dimensional Brinkmann spacetime. If $(M,g)$ is transversally Killing, then the universal covering $(\overline{M}, \overline{g})$ of $(M,g)$ is isometric to a standard pp-wave.
\end{proposition}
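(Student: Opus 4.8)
\emph{Proof proposal.} The plan is to pass to the universal cover, observe that it inherits all the hypotheses together with transversal completeness, and then globalize Brinkmann's local normal form by invoking a version of Theorem 3 of \cite{leistner}.

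First I would fix the universal covering map $\pi\colon(\overline{M},\overline{g})\to(M,g)$, which is a local isometry. It pulls $\paral$ and $\killi$ back to vector fields $\overline{\paral},\overline{\killi}$ on $\overline{M}$, with $\overline{\paral}$ null and parallel and $\overline{\killi}$ Killing. Geodesic completeness passes to the cover: any geodesic of $\overline{g}$ projects to a geodesic of $g$, which extends by completeness of $(M,g)$, and the extension lifts back by the path-lifting property of $\pi$; thus $(\overline{M},\overline{g})$ is geodesically complete, and since $\overline{\paral}$ is parallel its integral curves are complete geodesics, so $\overline{\paral}$ is a \emph{complete} null parallel field and $(\overline{M},\overline{g})$ is a Brinkmann spacetime (still $4$-dimensional and Ricci-flat, these being local properties). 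Because $(M,g)$ is transversally Killing we have $g(\paral,\killi)\equiv 1$, hence $[\paral,\killi]=0$ by Proposition \ref{prop:killing}; both identities pull back, so $\overline{g}(\overline{\paral},\overline{\killi})\equiv 1$ and $[\overline{\paral},\overline{\killi}]=0$, i.e.\ $(\overline{M},\overline{g})$ is transversally Killing. Finally, a Killing field on a geodesically complete pseudo-Riemannian manifold is complete (as used in the proof of Corollary \ref{thm:teoIJ2}), so $\overline{\killi}$ is a complete field conjugate to $\overline{\paral}$; hence $(\overline{M},\overline{g})$ is also transversally complete.

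It therefore remains to prove the proposition when $M=\overline{M}$ is already simply connected, geodesically and transversally complete, Ricci-flat and $4$-dimensional. By Brinkmann's classical theorem \cite{brinkmann}, such a spacetime is \emph{locally} a pp-wave; equivalently, in dimension four the Ricci-flat condition forces the transversal (``screen'') holonomy of $\paral^{\perp}$ --- a priori a subgroup of $SO(2)$ --- to be trivial, so that the holonomy group of $(\overline{M},\overline{g})$ reduces to the abelian translational subgroup of the stabilizer of a null line. The remaining task is to turn this local/infinitesimal statement into a global isometry with a standard pp-wave (\ref{localBrink}): using the complete commuting fields $\overline{\paral}$ and $\overline{\killi}$ one flows out from a leaf of $\overline{\paral}^{\perp}$ transverse to $\overline{\killi}$ to produce global coordinates $(u,v,x^1,x^2)$ on $\overline{M}\cong\mathbb{R}^2\times\mathbb{R}^2$ in which $\overline{g}$ takes the form (\ref{localBrink}), with flat transversal metric and vanishing cross term $\Omega$. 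This is exactly the sort of conclusion delivered --- in the compact setting --- by Theorem 3 of \cite{leistner}; in our situation geodesic and transversal completeness play the role that compactness plays there (and that strong causality plays in Theorem \ref{thm:teoIJ}), ruling out the usual obstructions that the model map fail to be surjective or injective.

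I expect the main obstacle to be precisely that last step: reworking the proof of Theorem 3 of \cite{leistner}, which is built on compactness, into a version valid under geodesic plus transversal completeness on a simply connected manifold, and in particular checking carefully that (i) the Ricci-flat, four-dimensional hypothesis does reduce the holonomy to the abelian factor, and (ii) completeness of the flows of $\overline{\paral}$ and $\overline{\killi}$ suffices to extend the local pp-wave chart to all of $\overline{M}$. The passage to the universal cover and the verification that every hypothesis (including transversal completeness) survives it are, by contrast, routine.
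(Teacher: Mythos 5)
Your route is essentially the paper's: deduce from Ricci-flatness in dimension four that the curvature satisfies $R(V,W)=0$ for all $V,W\in\paral^{\perp}$ (so the spacetime is a pp-wave in the intrinsic sense), produce a complete vector field conjugate to $\paral$ from the Killing field via geodesic completeness, and then invoke Theorem 3 of \cite{leistner} to get the standard form on the universal cover. The one substantive point where you diverge is your assessment of that last citation. You treat Theorem 3 of \cite{leistner} as a result ``built on compactness'' whose proof must be reworked for the complete non-compact setting, and you flag this reworking as the main unresolved obstacle. That obstacle is not real: although the headline results of \cite{leistner} concern compact manifolds, their Theorem 3 is a standalone globalization statement for (not necessarily compact) pp-waves admitting a complete null vector field $Y$ with $g(\paral,Y)=1$, and it already concludes that the \emph{universal cover} is globally isometric to a standard pp-wave. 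The only adaptation the paper needs --- recorded in a footnote --- is the observation that the causal character of $Y$ is nowhere used in that proof, so a complete Killing field $\killi$ with $g(\paral,\killi)=1$ serves just as well. With that reading, your argument closes immediately and coincides with the paper's. A minor further remark: your careful verification that geodesic completeness, Ricci-flatness and transversal completeness all lift to $\overline{M}$ is correct but unnecessary here, precisely because the conclusion of Theorem 3 of \cite{leistner} is already phrased in terms of the universal covering of the given spacetime.
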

\begin{proof} Since $(M,g)$ is Ricci-flat and $4$-dimensional, it is locally a standard pp-wave (see, e.g., \cite{galaev} or the proof of Theorem \ref{thm:teo1} below), and therefore we have \cite{leistner} that \[
R(V,W) =0 , \forall V,W \in \paral ^{\perp}.
\]
But then, since there exists a (complete) Killing vector field $\killi$ with $g(\paral,\killi)=1$, $(M,g)$ is a pp-wave (in the intrinsic way, see Footnote \ref{foot1}) and  Theorem 3 of \cite{leistner} yields\footnote{In \cite{leistner} the authors assume in their Theorem 3 the existence of a complete {\em null} vector field $Y$ such that $g( \paral, \killi) =1$, but actually the causal character of $\killi$ is not used anywhere in the proof.} the result. \end{proof}
\smallskip

\begin{remark}
\label{rem:leist} {\em 
We shall need to give below an alternative proof of Proposition \ref{prop:nocausal} which uses strong causality. The justification is that this proof has the advantage of giving a very concrete form for the effect of the Killing vector field on the function $H$ (cf. Remark \ref{rem:aux} below). This will be crucial for our main proof. Moreover, strong causality is needed elsewhere in the proof anyway, so there is no real loss of generality in that causal assumption.}
\end{remark}

We end this section with some comments regarding notation. A coordinate system on a standard Brinkmann spacetime will be often denoted by $\{u,v,x^1,\dots,x^{n-2}\}$, where $\{x^1,\dots,x^{n-2}\}$ is a local coordinate system for $Q$. We will denote generic spatio-temporal indices by greek letters $\alpha,\beta, \dots$, and indices on the spatial fiber by latin letters $i,j,\dots$. We will also make use of $u,v$ for the corresponding indices, to avoid confusion with spatial fiber indices. We use throughout the Einstein's summation convention. Finally, the superscript ``$Q$'' indicates (covariant or exterior) differentiation and/or geometric quantities on the Riemannian manifold $(Q,\gamma)$.

 \section{From a standard Brinkmann spacetime to a pp-wave}\label{sec2}
Our first aim in this paper is to show that under conditions analogous to those appearing in Anderson's rigidity theorem, standard Brinkmann spacetimes are isometric to pp-waves. Concretely,

\begin{theorem}\label{thm:teo1}
	{\em Let $(M^n,g)$ be a standard Brinkmann spacetime and assume that:
		\begin{itemize}
			\item[i)] $M$ is simply connected,
			\item[ii)] $n=4$,
			\item[iii)] $(M,g)$ is geodesically complete, and
			\item[iv)] $Ric =0$, i.e., $(M,g)$ is Ricci-flat.
		\end{itemize}
		Then $(M,g)$ is isometric to a $pp$-wave. In fact, $M = \mathbb{R}^4$ and there exist coordinates $\{U,V,X,Y\}$ with $(U,V,X,Y)\in \mathbb{R}^4$, such that
		
		\[g = 2 dU (dV + \tilde{H}(U,X,Y)dU) + dX^2 + dY^2,\quad\hbox{with $\tilde{H}$ harmonic in $X,Y$.}
\]
				}
\end{theorem}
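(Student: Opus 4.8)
The plan is to read the geometric content of Ricci‑flatness directly off the standard form \eqref{SBM}, slice by slice in $u$, and then promote this to a global statement using simple connectivity (hypothesis (i)) and geodesic completeness (hypothesis (iii)). The first step is curvature bookkeeping. Since $\paral=\partial_v$ is parallel, $R(\cdot,\cdot)\partial_v=0$, so $\mathrm{Ric}(\partial_v,\cdot)=0$ and the only components of $\mathrm{Ric}$ that can fail to vanish are $R_{uu}$, $R_{ui}$ and $R_{ij}$; a short Christoffel computation (using $\Gamma^u_{\alpha\beta}=0$, because $du$ is parallel, together with the $v$-independence of all data) shows moreover that $R_{ij}$ coincides \emph{exactly} with the Ricci tensor of the frozen fibre metric $\gamma_u:=\gamma|_{TQ\times TQ}$, all the $\partial_u\gamma$- and $\Omega$-mixing terms cancelling in $R_{ij}$. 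Because $n=4$ makes $(Q,\gamma_u)$ a surface, Ricci‑flatness forces each $(Q,\gamma_u)$ to be flat; this already recovers, uniformly in $u$, Brinkmann's classical local conclusion \cite{brinkmann,galaev,schimming}.

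Next I feed in completeness. Along any geodesic $\sigma$ of $(M,g)$ the function $du(\dot\sigma)=\dot u$ is constant (differentiate and use $\nabla du=0$). A curve with $\dot u\equiv 0$ lies in a slice $\{u=u_0\}$, projects to a geodesic $x(s)$ of $(Q,\gamma_{u_0})$ — all $\dot u$-terms and all $\Gamma^{\bullet}_{v\bullet}$-terms drop out of the spatial geodesic equations — while its $v$-component merely integrates $\ddot v=-\Gamma^v_{ij}(u_0,x(s))\dot x^i\dot x^j$; conversely every $\gamma_{u_0}$-geodesic lifts this way. Hence geodesic completeness of $(M,g)$ forces geodesic completeness of every $(Q,\gamma_{u_0})$. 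A complete, flat, simply connected surface is isometric to $(\mathbb{R}^2,\delta)$, so $Q\cong\mathbb{R}^2$ and $M=\mathbb{R}^4$. Taking, for each $u$, a $\gamma_u$-orthonormal frame at the origin depending smoothly on $u$ (Gram–Schmidt on the coordinate frame) and composing with $\exp^{\gamma_u}_0$ produces a smooth family of isometries $\Psi_u:(\mathbb{R}^2,\delta)\to(Q,\gamma_u)$; the fibre coordinate change $\bar x=\Psi_u^{-1}(x)$ then trivializes $\gamma$ to $\delta$ for \emph{all} $u$ simultaneously, the cross‑terms $d\bar x\,du$ and $du^2$ it creates being reabsorbed into $\Omega$ and $H$, so the standard form \eqref{SBM} persists (with $\partial_v$ still the parallel null field).

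With $\gamma=\delta$ now $u$-independent the remaining equations collapse. One checks that $R_{ui}=0$ amounts to $\partial_j(\partial_i\Omega_j-\partial_j\Omega_i)=0$, i.e.\ the fibre ``curl'' $F:=\partial_1\Omega_2-\partial_2\Omega_1$ depends on $u$ alone; a $u$-dependent planar rotation $R(u)\in SO(2)$ of $(x^1,x^2)$ — again a diffeomorphism of $\mathbb{R}^4$ respecting \eqref{SBM} — shifts $F$ by $2\dot\theta(u)$ (with $\theta$ the rotation angle), so choosing $\theta(u)=-\tfrac12\int_0^u F$ gives $F\equiv 0$. Then each $\Omega(\cdot,u)$ is closed on the simply connected fibre, hence $\Omega=d^Q f$ with $f$ smooth in $(u,x)$, and the shift $v\mapsto v-f$ eliminates $\Omega$, leaving a genuine pp‑wave $g=2dU(dV+\tilde H(U,X,Y)\,dU)+dX^2+dY^2$. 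For such a metric $R_{uu}=-\Delta_{X,Y}\tilde H$, so Ricci‑flatness says precisely that $\tilde H$ is harmonic in $(X,Y)$ — the assertion.

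I expect the main obstacle to lie not in any individual curvature identity — those are mechanical — but in the globalization bookkeeping of the last two steps: verifying that the exponential‑plus‑frame construction $\Psi_u$ is genuinely smooth in $u$, and that it, the subsequent rotation, and the final $v$-shift each preserve the standard Brinkmann form, in particular introducing no $v$-dependence and leaving $\partial_v$ in its role as the parallel null vector field; and checking that slicewise flatness and completeness assemble into one global chart on $\mathbb{R}^4$ rather than a mere $u$-family of local ones — which is exactly where simple connectivity of $Q$ (forcing $\exp^{\gamma_u}_0$ to be a global diffeomorphism) is used.
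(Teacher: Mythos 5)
Your proposal is correct and follows essentially the same route as the paper's proof: the curvature computation forces the two–dimensional fibre to be flat (hence, by completeness and simple connectivity, Euclidean $\mathbb{R}^2$), the $R_{ui}=0$ equation shows the curl of $\Omega$ depends on $u$ alone so that a $u$-dependent rotation together with a $v$-shift by a potential eliminates $\Omega$, and $R_{uu}=0$ then yields harmonicity of $\tilde{H}$. The differences are only organizational — you perform the rotation before the $v$-shift rather than after, and you extract completeness of the fibres at the outset via geodesics with $\dot u=0$ rather than at the end via the totally geodesic $U,V=\mathrm{const.}$ surfaces — and your explicit $u$-dependent flattening $\Psi_u$ of the fibre metric is, if anything, a more careful treatment of a point the paper passes over quickly.
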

\begin{proof} Note, first of all, that (i) implies that $Q$ is connected and simply connected. Pick local coordinates $\{x^1,\dots,x^{n-2}\}$ on an open connected and simply connected patch $U \subseteq Q$, together with the given global coordinates $(u,v)$ on the $\mathbb{R}^2$ part. On the neighborhood covered by the coordinates $u,v,x^1,\dots,x^{n-2}$, the metric \eqref{SBM} becomes
\begin{equation}
\label{ref2}
g = 2 du (dv + H(u,x)du + \Omega_i(u,x)dx^i) + \gamma_{ij}(u,x)dx^i dx^j,
\end{equation}
where $x = (x^1,\dots,x^{n-2})$. A direct computation of the Christoffel symbols shows that the only non-zero ones are
\begin{equation}\label{chrissym}
 \begin{array}{ll}
 \Gamma ^i_{uu} = - (\nabla^Q H)^i +  \gamma ^{ij}\frac{\partial \Omega _{j}}{\partial u}, & \Gamma ^i _{uk} = \Gamma ^i _{ku}  = -\gamma ^{ij}(d^Q \Omega)_{jk} + \gamma ^{ij}\frac{\partial \gamma  _{jk}}{\partial u}, \\
 \Gamma ^i_{jk} = (\Gamma^Q)^i_{jk},  & \Gamma ^v _{ku}  = \frac{\partial H}{\partial x^k} + \gamma^{ij}\Omega _i (d^Q \Omega)_{jk} -  \gamma ^{ij}\Omega _i\frac{\partial \gamma _{jk}}{\partial u}, \\
 \Gamma ^v _{ij} = \frac{1}{2}[(\nabla^Q)_i \Omega _j + (\nabla^Q)_j \Omega _i -  \frac{\partial \gamma_{ij}}{\partial u}]\quad & \Gamma ^v _{uu}  = \frac{\partial H}{\partial u} + \Omega _i (\nabla^Q H)^i -  \gamma^{ij}\Omega _i\frac{\partial \Omega _{j}}{\partial u}.
 \end{array}
 \end{equation}
Again, a direct calculation reveals that
\begin{equation}
\begin{array}{ccc}
R^i_{jkl} = (R^Q)^i_{jkl}, & R^v_{\alpha v \beta} =0, & R^u_{j u l} =0,
\end{array}
\end{equation}
so that $0 = (Ric)_{jl} = (Ric^Q)_{jl}$, i.e., $(Q,\gamma)$ is Ricci-flat. Specializing to $n=4$, we have that dim $Q$ =2, so $Q$ is flat. Since $Q$ is simply connected and bidimensional, it is diffeomorphic to $\mathbb{R}^2$, and we may take $U \equiv Q$, which we do from now on. We may, therefore, select {\em a posteriori} coordinates $x:=x^1$ and $y:= x^2$ so that $\gamma_{ij} = \delta_{ij}$, and thus $\Gamma ^i_{jk} = (\Gamma ^Q)^i_{jk} =0$ {\em globally}.

With these coordinates, we have
\[
0= (Ric)_{uk} = \frac{ \partial (d^Q\Omega)_{ik}}{\partial x^i},
\]
or
\[
\frac{\partial}{\partial x^i} \left(\frac{\partial\Omega _i}{\partial x^k} - \frac{\partial \Omega _k}{\partial x^i} \right) =0.\, (k=1,2)
\]
The latter equation implies that the quantity
\begin{equation}\label{eq1}
\alpha := \frac{1}{2}\left(\frac{\partial\Omega _1}{\partial y} - \frac{\partial \Omega _2}{\partial x} \right)
\end{equation}
only depends on the parameter $u$, i.e., $\alpha\equiv\alpha(u)$. We may therefore define a new 1-parameter family of {\em closed} (thus, {\em exact}) 1-forms $\tilde{\Omega}(u)$ on $Q$ by
\[
\begin{array}{cc}
\tilde{\Omega}:=(-\alpha y+\Omega_1)dx + (\alpha x+\Omega_2)dy.
\end{array}
\]
Therefore, there exists some function $f \in C^{\infty}(\mathbb{R}\times Q)$ such that $\tilde{\Omega}=df$, and so,
\[
\Omega_1(u,x,y)  = \frac{\partial f}{\partial x}(u,x,y) + \alpha(u)y,\quad \Omega_2(u,x,y)  = \frac{\partial f}{\partial y}(u,x,y) - \alpha(u)x.
\]
Consider now the change of variable $V=v+f(u,x,y)$ which transforms the metric (\ref{ref2}) into

\begin{equation}\label{ref2-1}
2du\left(dV + \check{H}(u,x,y)du +\alpha(u)\left(ydx-xdy\right)\right) + dx^2+dy^2,
\end{equation}

\noindent where

\begin{equation}\label{eq4}
\check{H}(u,x,y):= H(u,x,y) - \frac{\partial f}{\partial u}(u,x,y).
\end{equation}
Then, all we need to do is remove the term  $\alpha(u)\left(ydx-xdy\right)$ to obtain a pp-wave. In order to achieve this, let us consider the following coordinates

\begin{equation}\label{eq3}
\begin{array}{rcl}
X &=& \cos (\beta(u)) x + \sin (\beta(u)) y, \\
Y &=& - \sin (\beta (u)) x + \cos (\beta (u)) y,
\end{array}
\end{equation}
where $\beta(u)=\int_0^u \alpha(s)ds$, and observe that

\[
dX^2+dY^2=dx^2 + dy^2 + \alpha^2(u)\left(x^2+y^2\right)du^2+2\alpha(u)du\left(ydx -xdy\right).
\]

In conclusion, in the coordinates $\{U,V,X,Y\}$ (with $U:=u$) we have that the metric (\ref{ref2-1}) becomes

\begin{equation}\label{ref3}
g = 2 dU (dV + \tilde{H}(U,X,Y)dU) + dX^2 + dY^2,
\end{equation}
where

\begin{equation}\label{eq:aux4}
\tilde{H}(U,X,Y):= H(u,x,y) - \frac{\partial f}{\partial u}(u,x,y)- \frac{\alpha^2(u)}{2} (x^2 + y^2).
\end{equation}

The condition that $(M,g)$ is Ricci-flat translates, in terms of these coordinates, into
\begin{equation}\label{g}
\frac{\partial ^2 \tilde{H} }{\partial X^2} +\frac{\partial ^2 \tilde{H} }{\partial Y^2} = 0.
\end{equation}
Now, $U$ and $V$ clearly have range $\mathbb{R}$, and the form of the metric (\ref{ref3}) implies the 2-dimensional submanifolds $U,V =const.$ are totally geodesic, and hence (since $(M,g)$ is geodesically complete) are {\em isometric} copies of the Euclidean 2-dimensional spaces. Hence, the range of both $X$ and $Y$ is also $\mathbb{R}$, which concludes the proof.
\end{proof}

\begin{remark}\label{rem:aux} {\em  In the particular case when the standard Brinkmann spacetime $(M,g)$ is autonomous, we deduce that the function $\alpha$ defined on (\ref{eq1}) is actually constant. Hence, the new coordinates $X$ and $Y$ can be viewed as arising from $x,y$, for each $u$, via a rotation of angle $\alpha\cdot u$. So, even if we start from an autonomous Brinkmann spacetime, the pp-wave obtained after the changes of variables discussed above is not necessarily autonomous. Note, however, that in this case $\tilde{H}$ has a very concrete dependence on $U(:=u)$, given precisely by the variable change \eqref{eq3}, and we deduce from \eqref{eq:aux4} that
	\begin{equation}\label{eq:aux5}
	\tilde{H}(U,X,Y)=\hat{H}(x,y):=H(x,y)-\frac{\alpha^2}{2}(x^2+y^2).
	\end{equation}

On the other hand, if we also assume that ${\rm span}\{{\partial_v,\partial_u}\}^{\perp}$ is integrable, then $\Omega$ is closed, and thus exact (as we are in the universal cover). Therefore, the function $\alpha$ not only is constant, but actually equal to zero, and the change of coordinates in the spatial fiber (\ref{eq3}) becomes trivial. We conclude that in this case, the arguments in Theorem \ref{thm:teo1} lead in fact to an autonomous pp-wave.}
	 \end{remark}

\section{Proof of Theorem \ref{main}}\label{sec3}

In order to prove Theorem \ref{main}, we will need some preliminary lemmas and definitions. The following definition was introduced (in slightly different form)
in Refs. \cite{FS,FlS}.
\begin{definition}
	\label{atmostquadratic} A function $\fun:\mathbb{R}^n \rightarrow \mathbb{R}$ is {\em at most quadratic} if there exist numbers $a,b>0$ such that
	\[
	\fun(x) \leq a\|x\|^ 2 + b, \forall x \in \mathbb{R}^n.
	\]
	
\end{definition}

\begin{remark}
	\label{r1}
	{\em Note that if a function $\fun: \mathbb{R}^n \rightarrow \mathbb{R}$ is {\em not} at most quadratic, then there exists a sequence $\{x_k\}_k$ in $\mathbb{R}^n$ for which
		\[
		\fun(x_k) > k \|x_k\|^ 2 + k, \forall k \in \mathbb{N},
		\]
and, in particular, $\|x_k\| \rightarrow +\infty$ as $k \rightarrow +\infty$.
Clearly, if $\fun$ remains bounded above by a polynomial of degree at most 2 outside a compact subset of $\mathbb{R}^n$ then $\fun$ is at most quadratic.}
\end{remark}

The importance of Definition \ref{atmostquadratic} in our context arises from the following result, due to H.P. Boas and R.P. Boas (see \cite{BB}, Thm. II).
\begin{lemma}
	\label{boasyboas}
	A harmonic function $\fun: \mathbb{R}^n \rightarrow \mathbb{R}$ bounded from one side by a polynomial of degree $m$ is also a polynomial of degree at most $m$. In particular, if $\fun$ is at most quadratic, then there exist numbers $a_{ij},b_j \in \mathbb{R}$ ($i,j \in \{ 1, \ldots, n\}$) such that
	\[
	\fun(x) = \sum_{i,j =1}^ n a_{ij}x^ix^j + \sum_{j=1}^ n b_j x^j + \fun(0).
	\]
\end{lemma}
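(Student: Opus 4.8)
\textbf{Proof proposal for Lemma \ref{boasyboas}.}

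The plan is to reduce the general one-sided polynomial bound to the harmonic-function mean-value property applied on balls of growing radius, and then to combine this with classical Liouville-type reasoning on the successive derivatives of $\fun$. First I would recall the spherical mean-value property: if $\fun$ is harmonic on $\mathbb{R}^n$ and $S_R$ denotes the sphere of radius $R$ centered at the origin (with normalized surface measure), then the average $\foo_{S_R} \fun \, d\sigma$ equals $\fun(0)$ for every $R>0$. The core trick of Boas--Boas is that a \emph{one-sided} polynomial bound turns into a \emph{two-sided} control of these averages: if $\fun(x)\le P(x)$ with $\deg P=m$, then on one hand the average of $\fun$ over $S_R$ is bounded above by $\max_{|x|=R}P(x)=O(R^m)$, while on the other hand it equals the constant $\fun(0)$; subtracting, the function $\fun(0)-\fun$ is nonnegative in average sense only after one localizes, so the cleaner route is to bound the average of $|\fun|$ over $S_R$. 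Concretely, writing $g := P - \fun \ge 0$, harmonicity of $\fun$ gives that the average of $g$ over $S_R$ equals $\bigl(\text{average of }P\text{ over }S_R\bigr) - \fun(0) = O(R^m)$, and since $g\ge 0$ this is an $L^1$-average bound; adding back $P$ yields $\foo_{S_R}|\fun|\,d\sigma = O(R^m)$ as well.

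Next I would upgrade this $L^1$ growth bound on spheres to a pointwise growth bound on $\fun$ and all its derivatives, using interior gradient estimates for harmonic functions. Since $\fun$ is harmonic on all of $\mathbb{R}^n$, for any multi-index $\mu$ one has the standard estimate
\[
|\partial^\mu \fun(y)| \le \frac{C_{n,|\mu|}}{r^{|\mu|}}\,\sup_{B_r(y)}|\fun|
\]
for any ball $B_r(y)$; but one can do better and estimate $\sup_{B_r(y)}|\fun|$ in terms of the $L^1$-average of $|\fun|$ over a slightly larger ball (again by the mean value property and monotonicity of averages), which in turn is controlled by $\foo_{S_R}|\fun|$ for $R$ comparable to $|y|+r$. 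Choosing $r \sim |y|$ and running this through, I get $|\fun(y)| \le C(1+|y|)^m$ and, more importantly, for the $(m+1)$-st order derivatives $|\partial^\mu \fun(y)| \le C(1+|y|)^{m}/|y|^{m+1} \to 0$ as $|y|\to\infty$. Hence every $(m+1)$-st partial derivative of $\fun$ is a bounded harmonic function (it is harmonic since partial derivatives commute with the Laplacian), and tends to $0$ at infinity, so by the classical Liouville theorem it vanishes identically.

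Finally, $\partial^\mu \fun \equiv 0$ for all $|\mu| = m+1$ forces $\fun$ to be a polynomial of degree at most $m$ by Taylor's theorem (integrating up from the vanishing of the top derivatives, using that $\mathbb{R}^n$ is connected and simply connected so there is no obstruction). The displayed formula in the statement is then just the explicit expansion of a degree-$\le 2$ polynomial, with the constant term pinned to $\fun(0)$; when $\fun$ is at most quadratic we apply the above with $m=2$ and $P(x)=a\|x\|^2+b$. The step I expect to be the main obstacle — or at least the one requiring the most care — is the passage from the one-sided bound to the two-sided $L^1$ control of spherical averages: it is crucial that harmonicity converts the one-sided hypothesis into genuine two-sided information via the exact mean-value identity, and one must be careful that the error term $g=P-\fun$, though nonnegative, could a priori be large in sup-norm while small in average; the mean-value property is exactly what saves the argument, since it controls the average and the average alone. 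Everything else (interior estimates, Liouville, Taylor) is standard. Since Boas--Boas \cite{BB} already carry this out in full, I would cite their Theorem II for the general statement and merely specialize to $m=2$.

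\qcd
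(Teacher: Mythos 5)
Your proof is correct. Note first that the paper does not actually prove this lemma: the statement is attributed to Boas and Boas (\cite{BB}, Theorem II) and closed with a QED box without any argument, so there is no in-paper proof to compare against. Your sketch is a complete and standard route to the generalized Liouville theorem. The decisive step --- converting the one-sided bound $\fun\le P$ into two-sided $L^1$ control of the spherical averages by applying the exact mean-value identity to the nonnegative function $P-\fun$ --- is sound, and the subsequent interior (Cauchy-type) estimates, combined with the passage from sup-norms to $L^1$-averages over slightly larger balls, give $|\partial^\mu\fun(y)|=O(|y|^{\,m-|\mu|})$ for $|y|$ large; hence every partial derivative of order $m+1$ is a bounded harmonic function tending to zero at infinity and therefore vanishes identically, forcing $\fun$ to be a polynomial of degree at most $m$. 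This is in the same spirit as the published short proof, which likewise exploits positivity of $P-\fun$ together with mean values (there organized as a Harnack-type comparison of averages over translated balls rather than through derivative decay); your version is slightly longer but perhaps more transparent about why the degree cannot increase. For the paper's purposes only the case $m=2$ is used, and the citation to \cite{BB} suffices, but your argument would serve as a self-contained substitute. One cosmetic caveat: the average-integral symbol you use is not a macro defined in this paper's preamble, so it would have to be written out explicitly if the sketch were incorporated.
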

\qcd

The following two technical lemmas will also be instrumental in our proof.
\begin{lemma}
	\label{jose1}
	Let $\Omega \subseteq \mathbb{C} \equiv \mathbb{R}^2$ be an open set containing $0$, and let $\fun: \Omega \rightarrow \mathbb{R}$ be a harmonic function such that $\fun(0) =0$. Then, for each $R>0$ such that $\overline{B_R(0)} \subset \Omega$, there exists a number $\theta _R\in [0,2\pi)$ for which
	\[
	\int_0^ R \fun(re^{i\theta _R})\,dr =0.
	\]
\end{lemma}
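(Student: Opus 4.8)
The plan is to reduce the statement to an application of the intermediate value theorem, using the mean value property of harmonic functions as the key ingredient. Define the auxiliary function
\[
\varphi : [0, 2\pi] \rightarrow \mathbb{R}, \qquad \varphi(\theta) := \int_0^R \fun(re^{i\theta})\, dr .
\]
Since $\fun$ is continuous (indeed smooth) on the compact set $\overline{B_R(0)} \subset \Omega$, the integrand is jointly continuous in $(r,\theta)$, so $\varphi$ is well defined and continuous on $[0,2\pi]$, and moreover $\varphi(0) = \varphi(2\pi)$, so $\varphi$ descends to a continuous function on the circle.

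The main step is to compute the average of $\varphi$ over $[0,2\pi]$. By Fubini's theorem (again justified by continuity of $\fun$ on the compact set $\overline{B_R(0)}$), I would interchange the order of integration:
\[
\int_0^{2\pi} \varphi(\theta)\, d\theta = \int_0^R \left( \int_0^{2\pi} \fun(re^{i\theta})\, d\theta \right) dr .
\]
For each fixed $r \in [0,R]$, the circle $\{ re^{i\theta} : \theta \in [0,2\pi] \}$ lies in $\Omega$, and since $\fun$ is harmonic the mean value property gives $\frac{1}{2\pi}\int_0^{2\pi} \fun(re^{i\theta})\, d\theta = \fun(0) = 0$. Hence the inner integral vanishes identically in $r$, and therefore $\int_0^{2\pi} \varphi(\theta)\, d\theta = 0$.

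To conclude, observe that a continuous function on $[0,2\pi]$ whose integral is zero cannot be strictly of one sign unless it vanishes identically: if $\varphi \geq 0$ everywhere then $\int_0^{2\pi}\varphi = 0$ forces $\varphi \equiv 0$, and similarly if $\varphi \leq 0$ everywhere. Thus either $\varphi \equiv 0$, in which case any $\theta_R \in [0,2\pi)$ works, or $\varphi$ takes both a strictly positive and a strictly negative value, and the intermediate value theorem produces a point $\theta_R$ (which can be chosen in $[0,2\pi)$, adjusting by periodicity if needed) with $\varphi(\theta_R) = 0$, i.e. $\int_0^R \fun(re^{i\theta_R})\, dr = 0$, as claimed.

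There is really no serious obstacle here: the only points requiring any care are the routine justification of the Fubini interchange and the observation that the circles of radius $r \le R$ stay inside $\Omega$ (which is exactly the hypothesis $\overline{B_R(0)} \subset \Omega$), so that the mean value property applies on every such circle.
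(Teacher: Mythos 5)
Your proof is correct and follows essentially the same route as the paper: define $I_R(\theta)=\int_0^R \fun(re^{i\theta})\,dr$, integrate over $\theta$, use Fubini and the mean value property of $\fun$ on each circle of radius $r\le R$ to see the total integral vanishes, and conclude by continuity/IVT. Your treatment is if anything slightly more careful than the paper's, since you explicitly handle the case $I_R\equiv 0$ and the passage from a sign change to an actual zero.
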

{\em Proof.} Fix one such $R>0$. Consider the continuous function $I_R: \theta \in [0, 2\pi) \mapsto I_R(\theta) \in \mathbb{R}$ given by
\[
I_R(\theta):= \int_0^ R \fun(re^{i\theta})\,dr,\;\; \forall \theta \in [0, 2\pi).
\]
Integrating this function on the interval $[0, 2\pi)$, we get
\[\begin{array}{rl}
\int_0^ {2\pi} I_R(\theta)\,d\theta =& \int_0^ {2\pi}\int_0^ R \fun(re^{i\theta})\,drd\theta= \int_0^ R \left(\int_0^ {2\pi} \fun(re^{i\theta})\,d\theta\right)\,dr\\  =& \int_0^ R 2 \pi r \fun(0)\,dr \equiv 0,
\end{array}\]
where we have used the mean value theorem for harmonic functions in the third equality. Hence, for some $\theta_R \in [0, 2 \pi)$ , $I_R(\theta_R) =0$ as claimed.
\qcd

\begin{lemma}
	\label{jose2}
	Let $\Omega \subseteq \mathbb{C} \equiv \mathbb{R}^2$ be an open set containing $0$, and let $\fun: \Omega \rightarrow \mathbb{R}$ be a harmonic function such that $\fun(0) =0$. Then, for each $R>0$ such that $\overline{B_R(0)} \subset \Omega$, and for each $p \in \partial B_R(0)$, there exists a piecewise smooth curve $z: [0,1] \rightarrow \overline{B_R(0)}$ such that
	\begin{itemize}
		\item[i)] $z(0) = z(1) =0$ and $z(t_0) =p$ for some $t_0 \in (0,1)$,
		\item[ii)] $\int_0^ 1 \fun(z(t)) dt \geq \frac{1}{5} \fun(p)$, and
		\item[iii)] $\int_0^ 1 \| \dot{z}(t)\|^ 2 dt \leq 50\pi^2 R^2$.
	\end{itemize}
\end{lemma}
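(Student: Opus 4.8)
The plan is to build the curve $z$ explicitly by concatenating a radial segment going out from $0$ to $p$, a short arc along $\partial B_R(0)$, and a radial segment coming back in to $0$, where the two radial segments are chosen along the directions $e^{i\theta_R}$ furnished by Lemma \ref{jose1} so that their contributions to $\int_0^1 \fun(z(t))\,dt$ vanish. More precisely, write $p = Re^{i\varphi}$ for some $\varphi$, and apply Lemma \ref{jose1} to obtain a direction $e^{i\theta_R}$ with $\int_0^R \fun(re^{i\theta_R})\,dr = 0$. The curve will run: (I) from $0$ out to $Re^{i\theta_R}$ along that ray, (II) along the circular arc of radius $R$ from $Re^{i\theta_R}$ to $p = Re^{i\varphi}$ (choosing the shorter of the two arcs, so its angular length is at most $\pi$), then continuing (III) along that same circle a little past $p$ and back so that $p$ is hit at an interior time — or more simply, have the curve pass through $p$ and then immediately retrace, i.e. (III) back along the arc from $p$ to $Re^{i\theta_R}$, and (IV) from $Re^{i\theta_R}$ radially back in to $0$. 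With a suitable affine reparametrization of $[0,1]$ this is piecewise smooth and satisfies (i).

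For (ii): segments (I) and (IV) together contribute $\pm 2\int_0^R \fun(re^{i\theta_R})\,dr = 0$ (up to the reparametrization constants, which only scale each piece by a positive factor — here one must be slightly careful and allocate equal parameter-length to (I) and (IV) so the cancellation is exact). The arc pieces (II) and (III) contribute $2\kappa \int_{\text{arc}} \fun\,(\text{angle})$ for some positive constant $\kappa$ depending on the parametrization. The point hit at the midpoint is $p$; the arc has length at most $\pi R$. The idea is that $\fun$ restricted to the arc cannot dip too far below $\fun(p)$ relative to its length: one estimates the oscillation of the harmonic function $\fun$ along a sub-arc of $\partial B_R(0)$ using a gradient bound, or — cleaner — one chooses the parametrization of the arc pieces so that the ``mass'' $\int_0^1 \fun(z(t))\,dt$ is dominated by a neighborhood of the midpoint where $z(t)$ is near $p$. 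The constant $\tfrac15$ is generous, so any crude bound of the form ``the arc contribution is at least (some fixed fraction) of $\fun(p)$'' will do; the natural route is to make the arc traversal slow near $p$ and fast away from it, combined with continuity/a Harnack- or mean-value-type comparison. (If $\fun(p) \le 0$ the inequality is trivial since the left side can be made $0$ by the cancellation plus symmetric parametrization, or one simply notes $\tfrac15\fun(p)\le 0 \le$ anything — so only $\fun(p) > 0$ matters.)

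For (iii): the two radial segments have total length $2R$; parametrized with constant speed over parameter-length $\ell_{\text{rad}}\le 1$, each contributes $R^2/\ell_{\text{rad}}$-type terms to $\int_0^1\|\dot z\|^2$, and the two arc pieces have total length at most $2\pi R$, contributing $(\pi R)^2/\ell_{\text{arc}}$-type terms. Choosing the parameter lengths of the four pieces to be fixed fractions of $1$ (say each piece gets parameter-length $\ge \tfrac1{8}$), every term is bounded by a universal constant times $R^2$, and the stated bound $50\pi^2 R^2$ is comfortably large enough to absorb whatever explicit constants the chosen parametrization produces. The main obstacle is item (ii): getting a clean, parametrization-independent-looking lower bound with the explicit constant $\tfrac15$ while simultaneously keeping (iii) under $50\pi^2R^2$, since the two requirements pull the parametrization in opposite directions (slowing down near $p$ to boost the integral in (ii) increases the speed elsewhere and hence the energy in (iii)). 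I expect the resolution to be a careful but elementary choice of the arc parametrization — e.g. linear in angle on the arc, using only that $\fun$ is continuous and that the midpoint value is $\fun(p)$, together with the harmonic mean-value property to control the average — balanced so that both inequalities hold with room to spare.
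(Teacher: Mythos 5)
Your overall architecture (radial segment out along the direction $e^{i\theta_R}$ from Lemma \ref{jose1}, an arc of $\partial B_R(0)$ to reach $p$, then retrace) matches the paper's, and your treatment of the radial pieces and of the energy bound (iii) is fine. But the mechanism you propose for (ii) has a genuine gap, and it is precisely the step you flag as ``the main obstacle.'' You choose the \emph{shorter} arc from $Re^{i\theta_R}$ to $p$ and then hope to bound its contribution from below by a fixed fraction of $\fun(p)$ using continuity, a gradient/oscillation bound, a Harnack-type comparison, or a parametrization that lingers near $p$. None of these can work: by the Poisson integral, the boundary values of a harmonic function on $\overline{B_R(0)}$ with $\fun(0)=0$ are essentially an arbitrary continuous function with zero mean, so $\fun$ may equal $1$ at $p$ and $-N$ (with $N$ arbitrarily large) on a fixed-length portion of the shorter arc. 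Traversing that portion within the energy budget (iii) forces the curve to spend a time bounded below by a constant (Cauchy--Schwarz: time $\geq (\mathrm{length})^2/\mathrm{energy}$), so the arc's contribution to $\int_0^1\fun(z(t))\,dt$ is $\leq -cN$, which destroys the bound $\geq \tfrac15\fun(p)$ for large $N$. Slowing down near $p$ does not help, since the damage is done away from $p$.

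The paper's proof closes this gap with two moves you are missing. First, the factor $\tfrac15\fun(p)$ is not extracted from the arc at all: the curve simply \emph{sits at the point $p$} for a parameter interval of length $1/5$ (e.g.\ $t\in[2/5,3/5]$), which contributes exactly $\tfrac15\fun(p)$. Second, the arc is not the shorter one but the one with the right \emph{sign}: applying the mean value property to the whole circle gives
\[
0=\int_{\theta_R}^{\theta_0}\fun(\gamma(t))\,dt+\int_{\theta_0}^{\theta_R+2\pi}\fun(\gamma(t))\,dt,
\]
so at least one of the two arcs joining $Re^{i\theta_R}$ to $p$ has nonnegative $\fun$-integral with respect to angle; the curve traverses \emph{that} arc (possibly the long one, which is why the energy constant must accommodate an angular length up to $2\pi$, accounting for the $50\pi^2$). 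Its contribution, after an affine reparametrization, is then $\geq 0$ and can simply be discarded from the lower bound. This is the correct ``mean-value-type'' input you were looking for; with it, (ii) follows with no estimate on the pointwise behavior of $\fun$ along the arc, and (iii) follows from your (correct) constant-speed energy accounting.
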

{\em Proof.} Fix one such $R>0$, and let $\theta_R \in [0, 2\pi)$ be as in Lemma \ref{jose1}. Write $p = Re^{i\theta _0} \equiv (R \cos \theta _0, R \sin \theta _0)$. We may assume $\theta_0 \geq \theta_R$, since the case when $\theta_0 \leq \theta_R$ is entirely analogous.

Assume first that $\theta_0 = \theta_R$. In this case, we
define
\begin{equation}\label{curve1}
z(t)=\left \{\begin{array}{ll} \frac{5}{2}tR e^{i\theta_R} &
\quad\;\;\hbox{if}\;\; t\in [0,2/5] \\ p  &
\quad\;\;\hbox{if}\;\; t\in [2/5,3/5] \\  \frac{5}{2}(1-t)R e^{i\theta_R}
& \quad\;\;\hbox{if}\;\; t\in
[3/5,1].\end{array}\right.
\end{equation}
With this definition, appropriate changes of variables immediately show that
\[
\int_0^{2/5} \fun(z(t))dt = \int_{3/5}^{1} \fun(z(t))dt = (2/5R) \int_0^ R \fun(r e^{i\theta_R})\,dr = 0
\]
from the choice of $\theta_R$, and hence
\[
\int_0^{1} \fun(z(t))dt = \frac{1}{5}\fun(p).
\]
Moreover,
\[
\int_0^ 1 \| \dot{z}(t)\|^ 2 dt = 5R^2 < 50 \pi^ 2 R^2.
\]

We now assume that $\theta_0 > \theta_R$. Consider the standard parametrization $\gamma: t \in [0,2 \pi] \mapsto R e^ {i t} \in \mathbb{C}$ of the circle of radius $R$. By the mean value theorem for the harmonic function $\fun$ we have
\[
0= \fun(0) = \frac{1}{2\pi} \int_{\theta_R}^ {\theta_R+2\pi} \fun(\gamma(t)) \|\dot{\gamma} (t) \| dt,
\]
whence we conclude that
\begin{equation}
\label{sign}
0= \int_{\theta_R}^{\theta_R+2\pi} \fun(\gamma(t))  dt = \int_{\theta_R}^ {\theta_0} \fun(\gamma(t))  dt + \int_{\theta_0}^{\theta_R+2\pi} \fun(\gamma(t))  dt.
\end{equation}
We may consider two cases:
\[
(a) \int_{\theta_R}^ {\theta _0} \fun(\gamma(t))  dt\geq 0,\qquad \hbox{or}\qquad
(b)\int_{\theta_0}^{\theta_R+2\pi} \fun(\gamma(t))  dt \geq 0.
\]

For (a), consider the reparametrization $\beta: t \in [1/5,2/5] \mapsto \gamma(5(\theta_0 - \theta_R)t + 2 \theta_R - \theta_0)$ of the curve $\gamma$. Then
\[
0 \leq \int_{\theta_R}^ {\theta _0} \fun(\gamma(t))  dt = \frac{1}{5(\theta_0 - \theta_R)} \int_{1/5}^{2/5} \fun(\beta(s))ds,
\]
and since we assume $\theta_0 > \theta_R$ we conclude that
\begin{equation}
\label{needed}
\int_{1/5}^{2/5} \fun(\beta(s))ds \geq 0.
\end{equation}
Also, note that $\beta(1/5) = R e^ {i\theta_R}$ and $\beta(2/5) = R e^ {i\theta_0}=p$, and
\[
\int_{1/5}^{2/5} \| \dot{\beta}(t)\|^ 2 dt = 5  (\theta_0 - \theta_R)^2 R^2 \leq 20 \pi^ 2 R^2.
\]

We may therefore define $z:[0,1] \rightarrow \mathbb{C}$ by
\begin{equation}\label{curve2}
z(t)=\left \{\begin{array}{ll} 5tRe^{i\theta_R} &
\quad\;\;\hbox{if}\;\; t\in [0,1/5] \\
 \beta(t)  &
\quad\;\;\hbox{if}\;\; t\in [1/5,2/5] \\ p &\quad\;\;\hbox{if}\;\; t\in [2/5,3/5]\\ \beta(1-t)
& \quad\;\;\hbox{if}\;\; t\in
[3/5,4/5] \\ 5(1-t)Re^{i\theta_R}
& \quad\;\;\hbox{if}\;\; t\in
[4/5,1].\end{array}\right.
\end{equation}
Thus,
\begin{eqnarray}
\label{terms}
\int_{0}^ {1} \fun(z(t))  dt &=& \int_{0}^ {1/5} \fun(5tRe^{i\theta_R})  dt + \int_{1/5}^ {2/5} \fun(\beta(t))  dt + \frac{1}{5}\fun(p) \\ \nonumber
&+& \int_{3/5}^ {4/5} \fun(\beta(1-t))  dt+ \int_{4/5}^ {1} \fun(5(1-t)Re^{i\theta_R})  dt \\ \nonumber
&=& \frac{2}{5R}\int_0^ R \fun(r e^ {i\theta_R}) dr (\equiv 0)+ 2 \int_{1/5}^ {2/5} \fun(\beta(t))  dt (\geq 0)+ \frac{1}{5}\fun(p) \\ \nonumber
&\geq & \frac{1}{5}\fun(p),
\end{eqnarray}
while
\[
\int_{0}^{1} \| \dot{z}(t)\|^ 2 dt = 2 \int_{1/5}^{2/5} \| \dot{\beta}(t)\|^ 2 dt + 10 R^2  \leq 50 \pi^2 R^2,
\]
which concludes case (a). The case (b) follows analogously, just interchanging $\beta$ with a map $\tilde{\beta}: [1/5,2/5] \rightarrow \mathbb{C}$ defined by
\[
\tilde{\beta} (t) = \gamma(5(\theta_R+2\pi-\theta_0)t+2\theta_0-\theta_R-2\pi)
\]
(compare with the definition of $\beta$), and the result follows.

\qcd

{\em Proof of Theorem \ref{main}}.

Note that since $(M,g)$ is transversally Killing, Corollary \ref{thm:teoIJ2} ensures that the universal covering of $(M,g)$ is a standard autonomous Brinkmann spacetime. We can then assume without loss of generality that $(M,g)$ is a standard autonomous Brinkmann spacetime, and so, that $M=\mathbb{R}^2\times Q$ and $g$ is expressed as \eqref{SBM} where $H$, $\Omega$ and $\gamma$ do not depend on the variable $u$.

Observe that now we can apply Theorem \ref{thm:teo1}, which ensures the existence of coordinates $\{U,V,X,Y\}$ with $(U,V,X,Y)\in \mathbb{R}^4$ for which $g$ has the expression
\[
g=2dU(dV+\tilde{H}(U,X,Y)dU)+dX^2+dY^2,\quad\hbox{with $\tilde{H}$ harmonic in $X,Y$.}
\]
Moreover, due the fact that $(M,g)$ is autonomous, Remark \ref{rem:aux} ensures that $\tilde{H}(U,X,Y)=\hat{H}(x,y)$ (see \eqref{eq:aux5}).

We wish to show that $\tilde{H}$ is quadratic in the coordinates $X$, $Y$. Now, since the coordinate transformations for $X$ and $Y$ are {\em linear} in $x,y$ (cf. \eqref{eq3}), in order to accomplish this it is enough to show that $\hat{H}$ is a quadratic function of $x$ and $y$.

Assume then, by way of contradiction, that $\hat{H}$ is {\em not} quadratic as a function of $x,y$. Since $-\hat{H}$ is harmonic in $x,y$, due to Lemma \ref{boasyboas} $-\hat{H}$ {\em can not} be at most quadratic in these coordinates. Therefore (cf. Remark \ref{r1}) we can pick a sequence $p_k=(x_k,y_k)$ in $\mathbb{R}^2$ for which
\begin{equation}\label{eq:aux3}
-\hat{H}(p_k) > k \|p_k\|^ 2 + k,\quad \forall k \in \mathbb{N}
\end{equation}
and $R_k:= \|p_k\| \rightarrow +\infty$ as $k \rightarrow +\infty$.

Our strategy from now on is as follows. We will show the existence of some open set ${\cal U}_0$ containing the origin $(0,0,0,0)$ of $M\equiv \mathbb{R}^4$  and timelike curve segments with endpoints arbitrarily close to the origin, such that they are not contained in ${\cal U}_0$, in violation of our assumption of strong causality for $(M,g)$. This contradiction then yields that $\hat{H}$ is indeed quadratic, which in turn establishes the theorem.

 So, let us fix ${\cal U}_0$ be the open Euclidean ball in $\mathbb{R}^4$ centered at the origin and with radius $R_0>0$. Fix a number $0< \Delta <R_0$. We can assume that $R_k >R_0$ for all $k \in \mathbb{N}$, and so, that any point $(u,v,p_k)\notin {\cal U}_0$. For each $k \in \mathbb{N}$, we may use Lemma \ref{jose2} with $\fun=-\hat{H}$ and pick a piecewise smooth curve $z_k: t \in [0,1] \mapsto \overline{B_{R_k}(0)}\subset\mathbb{C}\equiv\mathbb{R}^2$ such that
\begin{itemize}
	\item[(i)] $z_k(0) = z_k(1) =(0,0)$ and $z(t_k) =p_k$ for some $t_k \in (0,1)$,
	\item[(ii)] -$\int_0^ 1 \hat{H}(z_k(t)) dt \geq -\frac{1}{5} \hat{H}(p_k)$, and
	\item[(iii)] $\int_0^ 1 \| \dot{z_k}(t)\|^ 2 dt \leq 50\pi^2 R_k^2$.
\end{itemize}
Using these curves, we may define for each $k \in \mathbb{N}$, the piecewise curve $Z_k : t \in [0,1] \mapsto \mathbb{C}$ given by $Z_k(t)=e^{i\alpha\Delta t}z_k(t)$
for each $t \in [0,1]$, where $\alpha$ is defined in (\ref{eq1}) and it is constant due the autonomous character of $(M,g)$. Observe that, from construction, $\tilde{H}(\alpha\Delta t,Z_k(t))=\hat{H}(z_k(t))$. Therefore, if we write $z_k(t) = x_k(t) + i y_k(t)$ we compute:
\begin{eqnarray}
\label{estimatederivative}
\|\dot{Z}_k\|^2=\dot{Z}_k\,\dot{\overline{Z}}_k &=&   \alpha^2 \Delta ^2 \|z_k\|^2 + \| \dot{z}_k\|^2 + i \alpha \Delta (z_k\dot{\overline{z}}_k - \dot{z}_k \overline{z}_k) \\ \nonumber
&=& \alpha^2 \Delta^2 R_k^ 2 +\|\dot{z}_k\|^2+2\alpha \Delta(x_k \dot{y}_k - y_k\dot{x}_k) \\ \nonumber
&\leq& \alpha^2 \Delta^2 R_k^ 2 +\|\dot{z}_k\|^2+4|\alpha| \Delta R_k \|\dot{z}_k\| \\ \nonumber
&\leq& 3\alpha^2 \Delta^2 R_k^ 2 +3\|\dot{z}_k\|^2.
\end{eqnarray}
Joining the previous inequality with (iii) we conclude that
\begin{equation}
\label{estimatederivative2}
\int_0^ 1 \| \dot{Z_k}(t)\|^ 2 dt \leq C(\alpha, \Delta) R_k^ 2,
\end{equation}
where
\[
C(\alpha,\Delta) = 3\alpha^2\Delta^2 + 150\pi^2.
\]
Finally, for each number $E>0$ and for each $k \in \mathbb{N}$, we can define the curve $\Gamma_k^E:[0,1] \rightarrow \mathbb{R}^4$ given, for each $t \in [0,1]$, by
\[
\Gamma_k^E(t) := (V_k^ E(t), \Delta t, Z_k(t)),
\]
where
\begin{eqnarray}
\label{keydef}
V_k^E(t)&:=& -\Delta \int _0^t \tilde{H}(\alpha \Delta s, Z_k(s))ds - \frac{1}{2\Delta}\int_0 ^t \|\dot{Z}_k(s) \|^2 ds - \frac{Et}{2\Delta} \\ \nonumber
&=& - \Delta \int _0^t \hat{H}(z_k(s))ds - \frac{1}{2\Delta}\int_0 ^t \|\dot{Z}_k(s) \|^2 ds - \frac{Et}{2\Delta}.
\end{eqnarray}
It is easy to check, using the line element of $g$ in the form (\ref{ref3}), that each $\Gamma_k^E$ defines a timelike curve in $(M,g)$.

Now, consider the smooth functions $h_k: E \in (0,+\infty) \mapsto V_k^E(1) \in \mathbb{R}$ ($k \in \mathbb{N}$). Clearly, for each $k \in \mathbb{N}$, $h_k(E) < 0$ for large enough $E$. However, collecting our estimates (ii), \eqref{eq:aux3} and \eqref{estimatederivative2}, we get from (\ref{keydef})
\begin{equation}
\label{climax}
h_k(1) \geq (\frac{k}{5} \Delta  - \frac{C(\alpha, \Delta)}{2\Delta})R^2_k + \frac{k}{5} - \frac{1}{2\Delta}.
\end{equation}
It is clear from inequality (\ref{climax}) that we can pick $k_0 \in \mathbb{N}$ for which $h_{k_0}(1) >0$, and since $h_{k_0}$ is continuous, there exists $E_0 >0$ for which $h_{k_0}(E_0) =0$. We then conclude that $\Gamma^{E_0}_{k_0}$ is a timelike curve such that $\Gamma^{E_0}_{k_0}(0) = (0,0,0,0)$ and $\Gamma^{E_0}_{k_0}(1) = (0, \Delta, 0,0) \in {\cal U}_0$ but $\Gamma^{E_0}_{k_0}(t_{k_0})=(V_{k_0}^{E_0}(t_{k_0}),\Delta t_{k_0},p_k) \notin {\cal U}_0$, as desired; so the proof is complete.
\qcd

In the Introduction, we have shown how Theorem 1.2 implies, via Corollary \ref{keycor}, that a physically relevant but relatively restricted class of pp-waves actually fall under the important Cahen-Wallach subclass. Our final goal in this paper is to widen the scope of that result so as to encompass the larger class of those Brinkmann spacetimes envisaged in Theorem \ref{main}, and give a precise, concrete geometric characterization of when these are Cahen-Wallach spaces.

In order to achieve this, let $(M,g)$ be a Brinkmann spacetime in the conditions of Theorem \ref{main}. Corollary \ref{thm:teoIJ2} allows us to assume without loss of generality that $(M,g)$ is a standard autonomous Brinkmann spacetime. Now we assume, in addition, that ${\rm span}\{\paral,\killi\}^{\perp}$ is an integrable distribution, being $\paral$, $\killi$ the corresponding parallel and Killing null fields. Then, by applying Theorem \ref{thm:teo1}, and taking into account Remark \ref{rem:aux}, we deduce that $(M,g)$ is isometric to an autonomous pp-wave, i.e. with $H$ independent of $u$. So, if we take up again the arguments of the proof of Theorem \ref{main}, we conclude that $H$ must be quadratic in the spatial coordinates, and thus a Cahen-Wallach space. Summarizing:
\begin{corollary}
Let $(M,g)$ be a Brinkmann spacetime in the conditions of Theorem \ref{main} and denote by $\paral$ and $\killi$, respectively, the corresponding parallel and Killing null fields. Then, the universal cover of $(M,g)$ is a Cahen-Wallach spacetime if and only if ${\rm span}\{\paral,\killi\}^{\perp}$ is an integrable distribution.
\end{corollary}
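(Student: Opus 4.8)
The plan is to prove the equivalence after using Corollary \ref{thm:teoIJ2} to reduce to the case in which $(M,g)$ is itself a standard autonomous Brinkmann spacetime: $M=\mathbb{R}^2\times Q$, $g$ as in \eqref{SBM} with $H,\Omega,\gamma$ independent of $u$, $\paral=\partial_v$ and $\killi=\partial_u$. This reduction is harmless because integrability of $\mathrm{span}\{\paral,\killi\}^{\perp}$ is a local condition and is therefore unaffected by passing to the universal cover. First I would make the distribution explicit: a direct computation from \eqref{SBM} gives that $\mathrm{span}\{\partial_v,\partial_u\}^{\perp}$ is spanned by the vector fields $E_i:=\partial_{x^i}-\Omega_i\,\partial_v$, that $[E_i,E_j]=-(d^{Q}\Omega)_{ij}\,\partial_v$ (using that $\Omega$ does not depend on $u$ or $v$), and that $\partial_v\notin\mathrm{span}\{E_i\}$; hence $\mathrm{span}\{\paral,\killi\}^{\perp}$ is integrable if and only if $d^{Q}\Omega=0$, which in dimension four, by \eqref{eq1}, means exactly $\alpha\equiv 0$. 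The corollary is thereby reduced to the assertion that $(\overline M,\overline g)$ is Cahen-Wallach if and only if $\alpha=0$ (it being tacit here, as in Corollary \ref{keycor}, that $(M,g)$ is not flat).

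For the ``if'' direction I would invoke the discussion preceding the statement: if $\alpha=0$ then $\Omega$ is closed, hence exact on the simply connected $\overline M$, so by the last paragraph of Remark \ref{rem:aux} together with Theorem \ref{thm:teo1} the universal cover is an autonomous pp-wave, whose potential $H$ is, by the argument in the proof of Theorem \ref{main}, a harmonic quadratic function of the spatial coordinates; being non-flat it is genuinely quadratic, so $(\overline M,\overline g)$ is a Cahen-Wallach space by Corollary \ref{keycor}.

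For the ``only if'' direction I would argue by contraposition. Assuming $\alpha\neq 0$, Theorem \ref{thm:teo1}, Remark \ref{rem:aux} and the proof of Theorem \ref{main} present the universal cover as the plane wave $\overline g=2\,dU(dV+\frac12 B_{ij}(U)X^iX^j\,dU)+dX^2+dY^2$, where $B(U)=\rho(U)\,\hat A\,\rho(U)^{-1}$, $\rho(U)$ being the planar rotation through the angle $\alpha U$ appearing in \eqref{eq3} and $\hat A$ the constant, symmetric and trace-free Hessian of the harmonic quadratic $\hat H$ of \eqref{eq:aux5} (with $\hat A\neq 0$ in the non-flat case). Using the Christoffel symbols \eqref{chrissym} I would check that the curvature of such a plane wave is carried entirely by $R^{i}{}_{UjU}=-B_{ij}$, that every connection-correction term in $\nabla R$ cancels, and hence that
\[
(\nabla_{\partial_U}R)^{i}{}_{UjU}=-\dot B_{ij}=-\alpha\,[J,B(U)]_{ij},
\]
$J$ being a generator of planar rotations. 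Since $\hat A$ is symmetric, trace-free and nonzero it is never a scalar matrix, and neither is the conjugate $B(U)$; thus $[J,B(U)]\neq 0$ and $\nabla R\neq 0$, while a Cahen-Wallach space, being symmetric, has parallel curvature. Hence $(\overline M,\overline g)$ is not Cahen-Wallach, which closes the contrapositive.

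The hard part will be the curvature computation yielding $(\nabla_{\partial_U}R)^{i}{}_{UjU}=-\dot B_{ij}$ — essentially the classical but slightly delicate fact that a four-dimensional plane wave is locally symmetric exactly when its profile matrix is constant in Brinkmann coordinates, the remaining coordinate freedom ($u$-independent transverse rotations, shifts of $v$, affine reparametrizations of $u$) being unable to absorb the genuine $u$-dependence carried by $\rho(U)$ once $\alpha\neq 0$. A minor point to handle separately is the flat case, which must be excluded for the equivalence to be exact, since in dimension four flat Minkowski space is decomposable and therefore not a Cahen-Wallach space.
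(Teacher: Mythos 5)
Your proposal is correct, and its ``if'' direction coincides with the paper's: integrability forces $d^Q\Omega=0$, hence $\alpha=0$ in \eqref{eq1}, so Remark \ref{rem:aux} and Theorem \ref{thm:teo1} yield an autonomous pp-wave whose profile is quadratic by the strong-causality argument of Theorem \ref{main}. Where you genuinely diverge is the ``only if'' direction: the paper simply declares it trivial (implicitly appealing to the form of the parallel and Killing fields in the Cahen--Wallach model), whereas you prove the contrapositive by exhibiting the universal cover, for $\alpha\neq 0$, as a plane wave with rotating profile $B(U)$ conjugate to the constant trace-free Hessian $\hat A$, and showing $\dot B=\pm\alpha[J,B]\neq 0$, hence $\nabla R\neq 0$, hence not symmetric. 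The curvature computation you flag as the ``hard part'' is in fact already done in the paper --- it is exactly \eqref{eq:aux2}--\eqref{eq:aux1} and Proposition \ref{prop:aux1} (a standard plane wave is symmetric iff its profile is $u$-independent) --- so your argument amounts to combining that proposition with the explicit $U$-dependence produced by the rotation \eqref{eq3}; this buys a self-contained proof that does not require knowing the isometry algebra of Cahen--Wallach spaces. Two further points in your favor: your bracket computation $[E_i,E_j]=-(d^Q\Omega)_{ij}\,\partial_v$ makes explicit the equivalence ``integrable $\Leftrightarrow\ \alpha=0$'' that the paper uses only implicitly, and your exclusion of the flat case is a genuine correction --- if $\hat A=0$ the ``if'' direction produces Minkowski space, which is decomposable and hence not Cahen--Wallach, so the equivalence as stated in the paper needs the tacit non-flatness hypothesis you supply. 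The only cosmetic imprecision is writing $\tilde H=\frac12 B_{ij}(U)X^iX^j$ without the possible ($U$-dependent) linear and constant terms inherited from $\hat H$; these do not enter $R^i{}_{UjU}=-B_{ij}$, so nothing is affected.
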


\section*{Acknowledgments}

The authors are partially supported by the Spanish Grant MTM2013-47828-C2-2-P (MINECO and FEDER funds). JLF is also supported by the Regional J. Andaluc\'{i}a grant PP09-FQM-4496, with FEDER funds. IPCS wishes to acknowledge a research scholarship from CNPq, Brazil, Programa Ci\^{e}ncia sem Fronteiras, process number 200428/2015-2. JH would like to thank the Department of Algebra, Geometry and Topology of the University of M\'{a}laga, JLF the Department of Mathematics of the Universidade Federal de Santa Catarina, while IPCS extends his warm thanks to the faculty and staff members of the Department of Mathematics of the University of Miami, for their kind hospitality while part of the work on this paper was being carried out.

\end{document}